\newcommand{\REMOVE}[1]{}
\newtheorem{theorem}{Theorem}[section]
\newtheorem{lemma}[theorem]{Lemma}
\newtheorem{corollary}[theorem]{Corollary}
\theoremstyle{definition}
\newtheorem{definition}[theorem]{Definition}
\newcommand{\oneplus}{\oplus_1}
\newcommand{\twoplus}{\oplus_2}
\newcommand{\threeplus}{\oplus_3}
\newcommand{\symd}{\bigtriangleup}
\newcounter{this-list}
\newcounter{par-list}
\newlength{\parlistlength}
\begin{document}

\begin{titlepage}

\title{Matroid Secretary for Regular and Decomposable Matroids}
\author{Michael Dinitz\\Weizmann Institute of Science\\{\tt mdinitz@cs.cmu.edu} \and Guy Kortsarz\\Rutgers University, Camden\\{\tt guyk@camden.rutgers.edu}}
\date{}

\maketitle

\begin{abstract}
In the \emph{matroid secretary problem} we are given a stream of elements and asked to choose a set of elements that maximizes the total value of the set, subject to being an independent set of a matroid given in advance.  The difficulty comes from the assumption that decisions are \emph{irrevocable}: if we choose to accept an element when it is presented by the stream then we can never get rid of it, and if we choose not to accept it then we cannot later add it.  Babaioff, Immorlica, and Kleinberg~[SODA 2007] introduced this problem,  gave $O(1)$-competitive algorithms for certain classes of matroids, and conjectured that every matroid admits an $O(1)$-competitive algorithm.  However, most matroids that are known to admit an $O(1)$-competitive algorithm can be easily represented using graphs (e.g.~graphic, cographic, and transversal matroids).  In particular, there is very little known about \emph{$F$-representable} matroids (the class of matroids that can be represented as elements of a vector space over a field $F$), which are one of the foundational types of matroids.  Moreover, most of the known techniques are as dependent on graph theory as they are on matroid theory.  We go beyond graphs by giving $O(1)$-competitive algorithms for \emph{regular} matroids (the class of matroids that are representable over any field), and use techniques that are fundamentally matroid-theoretic rather than graph-theoretic.

Our main technique is to leverage the seminal regular matroid decomposition theorem of Seymour, which gives a method for decomposing any regular matroid into matroids which are either graphic, cographic, or isomorphic to a simple $10$-element matroid.  We show how to combine in a black-box manner \emph{any} algorithms for these basic classes into an algorithm for a given regular matroid, i.e.~how to respect the decomposition.  In fact, this allows us to generalize beyond regular matroids to \emph{any} class of matroids that admits such a decomposition into classes for which we already have good algorithms.  In particular, we give an $O(1)$-competitive algorithm for the class of \emph{max-flow min-cut} matroids, which Seymour showed can be decomposed into regular matroids and copies of the Fano matroid.
 \end{abstract}

\thispagestyle{empty}

\end{titlepage}

\section{Introduction}

In the classical secretary problem we are asked to perform the seemingly-simple task of selecting the most valuable element of a set, but in the setting where elements are presented online (with their values) and where our accept/reject decision for each element is irrevocable.  That is, we see the elements and their values one at a time, and upon seeing an element must decide whether to choose it, with this decision being irrevocable.  It is easy to see that if the values and the online order are both adversarial, then no algorithm can do very well.  However, if the ordering is random then it is well-known that the optimal solution (under adversarial values) is to let the first $1/e$ elements go by, and then select the next element that is more valuable than the best of the first $1/e$ elements.  This selects the most valuable element with probability at least $1/e$~\cite{Dynkin}.

One way of extending this problem which has received considerable attention is to allow ourselves to choose a set of elements, rather than a single one.  Kleinberg~\cite{Kle05} considered perhaps the simplest version of this problem, where we are given an integer $k$ and want to pick the set with at most $k$ elements with as much total value as possible.  He showed how to generalize the classical algorithm so that as $k$ gets larger, the competitive ratio improves.  This was then further generalized by Babaioff, Immorlica, and Kleinberg~\cite{BIK07}, who defined the \emph{matroid secretary problem} to be the version in which the sets that we are allowed to select are the independent sets of a matroid (and as before we are attempting to choose the set of maximum value).  They showed a connection between algorithms for the matroid secretary problem and truthful mechanisms for online auctions over the corresponding matroid domains, gave an $O(\log \rho)$-competitive algorithm for matroids of rank $\rho$, and for several specific classes of matroids (uniform, graphic, and bounded-degree transversal) gave $O(1)$-competitive algorithms.  Moreover, they conjectured that there is an $O(1)$-competitive algorithm for the matroid secretary problem on the class of all matroids.

Since~\cite{BIK07} there has been a significant amount of work towards this conjecture.  Chakraborty and Lachish~\cite{CL12} gave an improved $O(\sqrt{\log \rho})$-competitive algorithm, and some other classes of matroids have been shown to admit $O(1)$-competitive algorithms (e.g.~transversal matroids~\cite{DP12,KP09}, cographic matroids~\cite{Soto11}, and laminar matroids~\cite{IW11}).  However, matroids based on linear algebra (i.e.~\emph{representable} or \emph{vector} matroids) have resisted all attacks: not only is there nothing known for the general vector matroid case (other than the $O(\sqrt{\log \rho})$ algorithm for general matroids), there is not even anything known for any subclasses of vector matroids other than those that can be directly represented as graphs (e.g.~graphic and cographic matroids, both of which are easily seen to be vector matroids).  In this paper we give the first $O(1)$-competitive algorithm for classes of vector matroids that do not have a simple graph interpretation, namely \emph{regular} matroids and some extensions.  Regular matroids have multiple equivalent definitions, but are inherently linear algebraic: one definition is that they are the matroids that can be represented by totally unimodular matrices over $\mathbb{R}$, and another definition is that they are the matroids that are representable over any field.  Thus we take the first step towards a better understanding of secretary algorithms for vector matroids.

In fact, not only are all specific classes of matroids that are known to have $O(1)$-competitive algorithms based on graphs, the techniques used are generally based more on graph theory than on matroid theory.  We give the first algorithms that are based on deep results in matroid theory (Seymour's decomposition theorem for regular matroids and applications of splitter theory) rather than graph theory.  Our main technique is to use the theory of \emph{matroid decompositions} as pioneered by Seymour~\cite{Sey80} and Truemper~\cite{Tru92}, which give algorithms for decomposing regular matroids (and larger classes) into the sums of simpler matroids.  It turns out that these simpler matroids are all classes for which we already have good algorithms (graphic, cographic, or constant size matroids), and thus we simply need to design an algorithm that respects the notion of sum used in the decomposition.  While this is intuitively simple, it turns out to be quite nontrivial: we would like to be able to operate independently in each of the simple matroids, but it turns out that in order to do this we need global coordination between them to maintain independence.  We achieve this global coordination by modifying the decomposition to have a particularly nice structure, and then using this structure to make global decisions about how to modify each of the simple matroids in order to ensure that running a matroid secretary algorithm in each of them independently guarantees us a set that is independent in the sum.  We also need to guarantee that the modification of each matroid is relatively small, in order to make sure that the competitive ratio of our algorithm is not much worse than the worst of the algorithms that we run on the simpler matroids.

\subsection{Our Results}

In Section~\ref{sec:sums} we formally define the notion of matroid sums that we use, which was originally defined by Seymour~\cite{Sey80}.  Our main theorem is that if we are given a matroid $M$ and a decomposition of it into sums of matroids that we already have good algorithms for, then we have a good algorithm for $M$.  There is a slight technical complication: we need good algorithms not just for the matroids that are summed, but good algorithms even when some elements are contracted (see Section~\ref{sec:basic_matroid} for a definition of matroid contraction) and elements are added in parallel (two elements are parallel in a matroid if they form a circuit).

\begin{theorem} \label{thm:intro_main}
Suppose that we are given a matroid $M$ and a decomposition of $M$ into the sum of matroids from some set $\mathcal M$.  Let $\mathcal M'$ denote the class of matroids that can be obtained from $\mathcal M$ by taking a matroid $M' \in \mathcal M$, contracting either a single element or a $3$-circuit, adding parallel elements, and deleting elements.  Further suppose that we have an $\alpha$-competitive algorithm for all matroids in $\mathcal M' \cup \mathcal M$.  Then there is a $3\alpha$-competitive algorithm for the matroid secretary problem on $M$.
\end{theorem}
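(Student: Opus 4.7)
The natural approach is induction on the size of the decomposition. The base case is when $M \in \mathcal{M}$, in which case the given $\alpha$-competitive algorithm suffices. For the inductive step we write $M = M_1 \oplus_k M_2$ with $k \in \{1,2,3\}$ and show how to combine (modified) algorithms for $M_1$ and $M_2$ into one for $M$; since every element of the stream for $M$ lies in exactly one of $E(M_1), E(M_2)$, we can route each arrival to the appropriate sub-algorithm as it appears.

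The case $k=1$ is essentially trivial: the ground sets are disjoint and independence in $M$ is exactly the conjunction of independence in each summand, so the two algorithms can be run in parallel on their substreams with no competitive loss. For $k=2$ the summands share a single element $e$ which is removed in the sum, and an independent set $I = I_1 \cup I_2$ in $M$ must satisfy the 2-sum compatibility condition: either $I_i + e$ is independent in $M_i$ for both $i$, or $I_i$ is independent in $M_i \setminus e$ for both $i$. The idea is to commit, before the stream begins, to how $e$ will be ``used''---contract $e$ in one summand and either contract or delete it in the other---so that the two resulting problems are in $\mathcal{M}'$ and can be solved independently.

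The $k=3$ case is analogous, the shared structure now being a triangle $T = \{e_1,e_2,e_3\}$ rather than a single element. This is exactly why $\mathcal{M}'$ is closed under contraction of a 3-circuit and under addition of parallel elements: contracting one element of the triangle in a summand turns the other two into a pair of parallel elements in the reduced matroid, and these parallel elements must be modelled on the other side of the sum. The fact that $\mathcal{M}'$ also permits deletions handles the dual option where the triangle is ``not used.''

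The main obstacle is to control the competitive ratio without advance knowledge of $\mathrm{OPT}$, and in particular to prevent per-sum commitment losses from compounding across a decomposition tree of potentially large depth. My plan is to define a \emph{single} global distribution over orientation patterns (one choice per internal sum) such that (i) every realization produces leaf-level matroids in $\mathcal{M}'$, (ii) independent outputs at the leaves always combine into an independent set of $M$, regardless of the realization, and (iii) the expected value of the elements of $\mathrm{OPT}$ that survive the orientation at the leaves is at least $|\mathrm{OPT}|/3$. Running the $\alpha$-competitive leaf algorithms then yields expected value at least $|\mathrm{OPT}|/(3\alpha)$, which is the claim. The factor $3$ is forced by the 3-sum, whose shared triangle admits (essentially) three distinct usage patterns by $\mathrm{OPT}$; the delicate point is to verify that local orientation choices at different sums can be made \emph{simultaneously} with only a single factor-$3$ loss rather than one per sum---this global partition-and-randomization argument is the heart of the proof.
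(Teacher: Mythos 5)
Your high-level framing matches the paper's: route each arriving element to the basic matroid containing it, contract the shared set on one side of each sum and delete it on the other (this is exactly why $\mathcal M'$ is closed under contracting a single element or a $3$-circuit), and the danger to avoid is a loss that compounds over a deep decomposition tree. But there is a genuine gap: the step you yourself call ``the heart of the proof'' --- that the per-sum commitments can be coordinated globally with only a single factor-$3$ loss --- is not supplied, and the plan you sketch for it (a random orientation pattern under which each element of $\mathrm{OPT}$ ``survives'' with probability $1/3$) fails as stated. A single basic matroid may participate in many sums; under any orientation, and in particular under a random one, it can be forced to contract a large number of shared elements (possibly a basis), after which its contracted copy retains essentially no value even if the adversary placed all the weight on its real elements. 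The paper escapes this not by randomization but by restructuring the decomposition: it shifts parallel elements across sums so that every sum-set lies entirely in two basic matroids (Lemma~\ref{lem:no_bad}), shows the conflict graph of such a decomposition is a forest (Lemma~\ref{lem:forest}), roots the forest, and has each basic matroid $M$ contract only the one set $A_M$ shared with its parent while deleting all other fake elements; global independence then follows from Lemmas~\ref{lem:sum_contract1}, \ref{lem:sum_contract2} and \ref{lem:independence}.

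A second, related gap is that ``expected value of the $\mathrm{OPT}$ elements that survive'' is not the right quantity: contraction deletes no element of $\mathrm{OPT}$ (which contains no fake elements), but it can make parts of $\mathrm{OPT}$ dependent --- e.g.\ an element parallel to a contracted element is lost entirely. To get the deterministic per-matroid bound $w(OPT(M/A_M))\geq \tfrac13 w(OPT(M))$ (Lemma~\ref{lem:value_contract}), the paper needs a further modification guaranteeing no real element forms a $2$-circuit with an element of $A_M$ (Theorem~\ref{thm:structure_main}); this shifting of elements across sums, not the modelling of a contracted triangle, is why $\mathcal M'$ must allow adding parallel elements (Lemma~\ref{lem:moving_parallel}). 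Without these structural steps neither condition (ii) nor (iii) of your plan has a proof, and (iii) is simply false for naive random orientations. Finally, your stated $2$-sum ``compatibility condition'' is vacuous as a characterization (its second disjunct is always satisfied by an independent set of the sum); the fact actually needed is the asymmetric one: $I_1$ independent in $M_1\setminus\{e\}$ and $I_2$ independent in $M_2/\{e\}$ together force $I_1\cup I_2$ independent in the sum, which is Lemma~\ref{lem:sum_contract2}.
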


Seymour's famous decomposition theorem for regular matroids~\cite{Sey80} implies that regular matroids can always be decomposed (in polynomial time) into graphic matroids, cographic matroids, and matroids isomorphic to a special $10$-element matroid $R_{10}$.  Graphic matroids are known to admit $2e$-competitive algorithms~\cite{KP09}, cographic matroids are known to admit $3e$-competitive algorithms~\cite{Soto11}, and we give a simple $20e/9$-competitive algorithm for $R_{10}$ (and matroids that can be constructed from it as described in Theorem~\ref{thm:intro_main}).  Thus we get the following corollary, which was the main motivation for this work:

\begin{corollary}
There is a $9e$-competitive algorithm for the matroid secretary problem on regular matroids.
\end{corollary}

While regular matroids were the original motivation for this approach, because our theorem is so general it turns out that we get even broader classes ``for free."  The most important one is the class of \emph{max-flow min-cut} (MFMC) matroids, which are defined in Section~\ref{sec:classes}.  These are the matroids for which a natural generalization of the max-flow min-cut theorem for graphs continues to hold, and it turns out that they are a generalization of regular matroids.  In fact, as a corollary of splitter theory~\cite{Sey80,Sey95} and an earlier characterization~\cite{Sey77}, Seymour proved that any MFMC matroid can be constructed by taking sums of regular matroids and the Fano matroid $F_7$.  We give a simple $(7e/3)$-competitive algorithm for matroid secretary on $F_7$ (and matroids that can be constructed from it by contraction, deletion, and addition of parallel elements), giving the following corollary.

\begin{corollary}
There is a $9e$-competitive algorithm for the matroid secretary problem on MFMC matroids.
\end{corollary}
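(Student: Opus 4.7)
The plan is to combine two applications of Seymour's decomposition theory with our black-box reduction (Theorem~\ref{thm:intro_main}). By Seymour's theorem for MFMC matroids, any MFMC matroid $M$ can be decomposed into a sum of regular matroids and copies of the Fano matroid $F_7$. I would then apply Seymour's regular matroid decomposition to each regular piece, breaking it further into graphic matroids, cographic matroids, and copies of $R_{10}$. Provided the sum operation used in Theorem~\ref{thm:intro_main} composes (so that a sum of sums is itself a sum of the same kind), this yields a single decomposition of $M$ whose atoms all lie in the class $\mathcal M = \{\text{graphic}\} \cup \{\text{cographic}\} \cup \{R_{10}\} \cup \{F_7\}$.

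Once the composed decomposition is in hand, I would invoke the four competitive-ratio results already cited or proved in the paper: $2e$ for graphic matroids, $3e$ for cographic matroids, $20e/9$ for $R_{10}$, and $7e/3$ for $F_7$. The maximum, attained by the cographic algorithm, is $\alpha = 3e$; note in particular that the Fano ratio $7e/3$ is strictly smaller than $3e$, so adding $F_7$ to the base class does not degrade the bound beyond the regular case. Theorem~\ref{thm:intro_main} also requires algorithms on the enlarged class $\mathcal M'$ obtained from $\mathcal M$ by single-element contraction, $3$-circuit contraction, deletion, and addition of parallel elements. Graphic and cographic classes are closed under contraction/deletion (with parallel elements handled trivially for a secretary algorithm by discarding duplicates), and the $R_{10}$ and $F_7$ algorithms promised in the introduction already cover their $\mathcal M'$-variants. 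Plugging $\alpha = 3e$ into Theorem~\ref{thm:intro_main} produces a $3\alpha = 9e$ competitive algorithm for $M$.

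The main obstacle is verifying that the two layers of decomposition compose into a single decomposition of the form required by Theorem~\ref{thm:intro_main}. Seymour's $k$-sums ($k\in\{1,2,3\}$) glue two matroids along a shared set of size $0$, $1$, or $3$, and recursively decomposing a regular summand requires that the sub-decomposition of that summand respects the elements it shares with its neighbors in the outer decomposition, and that the resulting tree of sums can be flattened into the structure assumed by the black-box reduction. I expect this to follow from the associativity properties of $k$-sums discussed in Section~\ref{sec:sums}, but if needed I would argue it directly by induction on the depth of the decomposition tree: at each step the sub-decomposition of a regular piece is grafted onto the outer decomposition, and since $k$-sums are defined by symmetric-difference type operations on the ground sets the grafting preserves the global sum structure. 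With this composability established, the corollary follows immediately from Theorem~\ref{thm:intro_main}.
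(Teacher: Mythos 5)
Your overall route---compose Seymour's MFMC decomposition (into $1$- and $2$-sums of regular matroids and copies of $F_7$) with the regular decomposition, and then feed the resulting $\{1,2,3\}$-decomposition whose basic matroids are graphic, cographic, $R_{10}$, or $F_7$ into the black-box theorem with $\alpha = 3e$---is exactly the paper's. The composability issue you single out as the main obstacle is not really one: a $\{1,2,3\}$-decomposition is by definition a rooted tree whose internal vertices are sums of their children, so the decomposition tree of each regular summand is simply grafted in place of the corresponding leaf; no associativity or flattening argument is required beyond this observation, which is how the paper treats it.

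The genuine gap is the $F_7$ atom. Theorem~\ref{thm:main} requires an $\alpha$-competitive algorithm not just for $F_7$ itself but for every matroid in $D(F_7)$, i.e., matroids obtained from $F_7$ by contracting a single element (or a $3$-circuit), deleting elements, and adding parallel elements. You dispose of this by citing the $7e/3$-competitive algorithm ``promised in the introduction,'' but that promise is fulfilled precisely in the proof of this corollary, so your argument is circular: constructing that algorithm is the substantive content of the proof you are asked to give. The construction is short but must appear: any single-element deletion or contraction of $F_7$ is isomorphic to the graphic matroid of $K_4$, so any $M \in D(F_7)$ with fewer than seven parallel classes is graphic and inherits the $2e$-competitive graphic algorithm; if $M$ has all seven parallel classes, delete one class chosen uniformly at random, which in expectation loses at most a $1/7$ fraction of the optimum (the optimum uses at most one element per parallel class) and leaves a graphic matroid, giving a $(7/6)\cdot 2e = 7e/3 < 3e$ guarantee. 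Relatedly, your aside that parallel elements in the graphic and cographic atoms are ``handled trivially by discarding duplicates'' is not the right justification (discarding could forfeit the only valuable copy); what actually matters is that the graphic and cographic classes are closed under adding parallel elements, deletions, and contractions, so the cited $2e$- and $3e$-competitive algorithms apply verbatim to their $D(\cdot)$-variants. Once the $D(F_7)$ algorithm is actually constructed, taking $\alpha = 3e$ and applying Theorem~\ref{thm:main} yields the claimed $9e$ bound.
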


\paragraph{Our techniques.}
Seymour's decomposition theorem for regular matroids gives us a way of decomposing any regular matroid into into matroids that are graphic, cographic, or isomorphic to $R_{10}$.  Given a decomposition, these underlying matroids are called the \emph{basic} matroids of the decomposition.  Unfortunately this decomposition has a somewhat complex structure, and simply running a good algorithm on each matroid in the decomposition in parallel does not work (it either gives a set that is not independent, or if constrained to only return an independent set does not return sets with large enough value).  It is not hard to construct an algorithm that always returns independent sets but loses a constant at every level of the decomposition, but this is not sufficient as the decomposition might have depth linear in the number of elements.  What we'd like is an algorithm that only loses a constant \emph{globally}, i.e.~is only a constant worse than the worst of the algorithms for each of the basic matroids.

In order to do this we need to utilize the global structure of the decomposition, and use this structure to make local decisions that somehow imply global competitiveness.  We do not know how to do this for arbitrary decompositions, but we show how to modify any given decomposition into one with global structure that will allow us to do this.  We first modify the decomposition to have the property that whenever two matroids are summed in the decomposition, the elements in the intersection of the matroids are all completely contained in two basic matroids.  We accomplish this by showing how to shift elements around in a decomposition until this is true, without affecting the validity of the decomposition.   Once we have such a decomposition, we prove that it has a very nice global structure: if we put an edge between two basic matroids that have intersecting ground sets, the resulting graph is a forest.  We then root each tree in this forest and use the resulting partial order to define a modification for each basic matroid that will allow us to simply run (in parallel) a separate algorithm in each basic matroid while guaranteeing global independence and losing at most a constant factor overall.

\section{Preliminaries}

\subsection{Basic Matroid Theory} \label{sec:basic_matroid}
We begin with some basic definitions and results from matroid theory.  We follow the conventions used by Oxley is his excellent book on matroid theory~\cite{Oxley}.  For a matroid $M$, we will let $E(M)$ denote the ground set of $M$.  A collections of sets $\mathcal I \subseteq 2^{E(M)}$ form the independent sets of a matroid if they satisfy the following three axioms: $\emptyset \in \mathcal I$, if $I \in \mathcal I$ and $I' \subseteq I$ then $I' \in \mathcal I$, and if $I_1, I_2 \in \mathcal I$ with $|I_1| < |I_2|$ then there is some element $e \in I_2 \setminus I_1$ such that $I_1 \cup \{e\} \in \mathcal I$.  A set that is not independent is called \emph{dependent}.  A \emph{base} or \emph{basis} is a maximal independent set (all bases clearly have the same cardinality), and a \emph{circuit} is a minimal dependent set.  We will call a circuit of size $k$ a \emph{$k$-circuit}, a circuit of size $1$ a \emph{loop}, and if two elements for a $2$-circuit then we say that they are \emph{parallel}.  We will frequently define matroids not in terms of their independent sets but in terms of their circuits.

Given an $m \times n$ matrix $A$ with entries over a field $F$, the collection of sets of columns of $A$ that are linearly independent in the $m$-dimensional vector space over $F$ form the independent sets of a matroid known as the \emph{vector matroid} of $A$.  That is, a set of columns is independent if and only if its elements are linearly independent.  A matroid is said to be representable over $F$, or $F$-representable, if it is isomorphic to the vector matroid of a matrix over $F$.  A \emph{binary} matroid is a matroid that is $GF(2)$-representable.  There are a few simple properties of binary matroids that will turn out to be very useful.  Given two sets $X$ and $Y$, the \emph{symmetric difference} $X \symd Y$ is defined to be $(X \cup Y) \setminus (X \cap Y) = (X \setminus Y) \cup (Y \setminus X)$.

\begin{lemma}[{\cite[Theorem 9.1.2]{Oxley}, \cite{Fou81}}] \label{lem:binary_basics}
The following statements are equivalent for a matroid $M$:
\begin{enumerate}
\item $M$ is binary.
\item If $C_1$ and $C_2$ are distinct circuits, then $C_1 \symd C_2$ is a disjoint union of circuits.
\item The symmetric difference of any set of circuits is a disjoint union of circuits.
\item For any two distinct circuits $C_1, C_2$ and elements $x,y \in C_1 \cap C_2$, there is a circuit contained in $(C_1 \cup C_2) \setminus \{x,y\}$ (the \emph{double-elimination} property).
\end{enumerate}
\end{lemma}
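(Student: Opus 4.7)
The plan is to prove the four-way equivalence by the cycle (1) $\Rightarrow$ (3) $\Rightarrow$ (2) $\Rightarrow$ (4) $\Rightarrow$ (1); the first three steps are essentially linear-algebraic bookkeeping, while (4) $\Rightarrow$ (1) requires genuine matroid structure and is what I expect to be the main obstacle.

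For (1) $\Rightarrow$ (3), I would fix any GF(2)-representation of $M$, associating a column vector $v_e$ with each element. Because circuits are minimal linearly dependent sets over GF(2), every circuit $C$ satisfies $\sum_{e \in C} v_e = 0$ (the minimal dependence is unique up to scalar, and the only nonzero scalar is 1). For any circuits $C_1,\ldots,C_k$, the symmetric difference $S = C_1 \symd \cdots \symd C_k$ is precisely the set of elements appearing in an odd number of the $C_i$, so $\sum_{e \in S} v_e = \sum_i \sum_{e \in C_i} v_e = 0$. A routine induction on $|S|$ then partitions any such $S$ into disjoint circuits: extract one circuit $C \subseteq S$, observe $\sum_{e \in S \setminus C} v_e = 0$, and iterate. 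Implication (3) $\Rightarrow$ (2) is just the case $k = 2$.

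For (2) $\Rightarrow$ (4): since $C_1 \ne C_2$ are distinct circuits, neither contains the other, so $C_1 \symd C_2$ is nonempty; by (2) it is a disjoint union of circuits and hence contains at least one circuit $C$. The assumption $x, y \in C_1 \cap C_2$ gives $x, y \notin C_1 \symd C_2$, so $C \subseteq C_1 \symd C_2 \subseteq (C_1 \cup C_2) \setminus \{x, y\}$, as required.

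For (4) $\Rightarrow$ (1), I would argue by contrapositive using Tutte's excluded-minor theorem: a matroid is binary iff it has no $U_{2,4}$ minor. It then suffices to show (i) that the double-elimination property is inherited by minors, and (ii) that $U_{2,4}$ itself violates (4). Part (ii) is immediate: the circuits of $U_{2,4}$ are the four 3-element subsets of $\{a,b,c,d\}$, and taking $C_1 = \{a,b,c\}$, $C_2 = \{a,b,d\}$, $x = a$, $y = b$, property (4) would demand a circuit inside $\{c,d\}$, impossible since every circuit has size three. Part (i) is routine for deletion (the circuits of $M \setminus e$ are exactly the circuits of $M$ avoiding $e$); for contraction $M/e$, one lifts circuits $C_1', C_2'$ to circuits of $M$ that may or may not contain $e$, applies (4) in $M$, and possibly applies it a second time to eliminate a leftover copy of $e$. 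This lifting bookkeeping is the chief technical step, and it is exactly the part I would be most careful about --- or, failing that, I would simply cite~\cite{Oxley, Fou81} for this implication.
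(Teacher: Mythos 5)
Your sketch is correct, but note that the paper does not prove this lemma at all: it is quoted as a known characterization of binary matroids, cited to Oxley (Theorem 9.1.2) and Fournier~\cite{Fou81}, so any proof you supply is ``extra'' relative to the paper. On the substance: your implications (1)~$\Rightarrow$~(3)~$\Rightarrow$~(2)~$\Rightarrow$~(4) are complete and correct --- the observation that a circuit in a $GF(2)$-representation sums to the zero vector, hence so does any symmetric difference of circuits, and the greedy extraction of circuits, is exactly the standard argument. For (4)~$\Rightarrow$~(1) you invoke Tutte's excluded-minor theorem (binary iff no $U_{2,4}$-minor); this works, and your $U_{2,4}$ counterexample to (4) is right, but it is a much heavier hammer than the classical route (Fournier, and Oxley's proof of 9.1.2, get from the elimination/symmetric-difference conditions to binarity directly, essentially by showing the fundamental-circuit incidence matrix over a basis represents $M$), and it leans on a theorem considerably deeper than the lemma itself. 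Also, the ``lifting bookkeeping'' for contraction that you flag as the chief worry is in fact routine and needs no second application of (4): the circuits of $M/e$ are precisely the minimal nonempty sets of the form $C \setminus \{e\}$ with $C$ a circuit of $M$ (Oxley, Proposition 3.1.11, which this paper itself uses elsewhere), so lifting $C_1', C_2'$ to circuits $C_1, C_2$ of $M$, applying (4) in $M$ to get a circuit $D \subseteq (C_1 \cup C_2)\setminus\{x,y\}$, and noting $D \neq \{e\}$ (else $e$ is a loop and could not lie in $C_1 \cup C_2$) already yields a circuit of $M/e$ inside $D \setminus \{e\} \subseteq (C_1' \cup C_2')\setminus\{x,y\}$. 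So the proposal is sound; if you want it to match the spirit of the paper, simply citing \cite{Oxley,Fou81} as the paper does is also entirely adequate.
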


A \emph{cycle} of a binary matroid (not to be confused with a cycle of a graph) is a disjoint union of circuits, i.e.~a set which can be partitioned into circuits.  It is easy to see that for any circuit contained in a cycle, we can assume that this partition contains the circuit:

\begin{lemma} \label{lem:partition}
Let $X$ be a cycle of a binary matroid $M$, and let $Z \subseteq X$ be a circuit of $M$.  Then there is a partition of $X$ into circuits of $M$ so that $Z$ is a circuit in the partition.
\end{lemma}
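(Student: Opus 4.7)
The plan is to exploit part (3) of Lemma~\ref{lem:binary_basics}, which says that the symmetric difference of any set of circuits in a binary matroid is a disjoint union of circuits. The key observation is that since $X$ is a cycle, there is by definition some partition $X = C_1 \sqcup C_2 \sqcup \cdots \sqcup C_k$ into circuits, and because the $C_i$ are pairwise disjoint, their union coincides with their symmetric difference: $X = C_1 \symd C_2 \symd \cdots \symd C_k$.

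Next I would apply this to the subset $Z$. Since $Z \subseteq X$, we have $X \setminus Z = X \symd Z = C_1 \symd C_2 \symd \cdots \symd C_k \symd Z$, which is the symmetric difference of $k+1$ circuits. Invoking Lemma~\ref{lem:binary_basics}(3), this set is a disjoint union of circuits, say $X \setminus Z = D_1 \sqcup D_2 \sqcup \cdots \sqcup D_m$. Then $X = Z \sqcup D_1 \sqcup D_2 \sqcup \cdots \sqcup D_m$ is a partition of $X$ into circuits of $M$ in which $Z$ appears as one of the parts, which is exactly what we want.

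There is no real obstacle; the only subtle point is remembering that when the $C_i$ are pairwise disjoint the set-theoretic union equals the iterated symmetric difference, so that $X$ itself may be written as a symmetric difference of circuits. Once that is in hand, the proof is a one-line application of the double/repeated-elimination property of binary matroids stated in Lemma~\ref{lem:binary_basics}.
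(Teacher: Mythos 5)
Your proposal is correct and follows essentially the same route as the paper: both apply Lemma~\ref{lem:binary_basics}(3) to the circuits composing $X$ together with $Z$ to conclude that $X \symd Z = X \setminus Z$ is a disjoint union of circuits, and then adjoin $Z$ to that partition. (Minor aside: the property you use is part (3), not the double-elimination property, but you cite it correctly in the body of the argument.)
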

\begin{proof}
Since $X$ is a cycle it is a disjoint union of circuits, so Lemma~\ref{lem:binary_basics}(3) implies that $X \symd Z = X \setminus Z$ is a disjoint union of circuits, i.e.~a cycle.  Thus $X \setminus Z$ can be partitioned into circuits, so this partition together with $Z$ forms a partition of $X$ into circuits.
\end{proof}

Two matroid operations that we will use heavily are \emph{deletion} and \emph{contraction}.  Given a matroid $M$ with independent sets $\mathcal I$, and subset $X \subseteq E(M)$, the \emph{deletion} of $X$ from $M$, denoted by $M - X$, is the matroid on $E(M) \setminus X$ in which a subset of $E(M) \setminus X$ is independent if and only if it is independent in $M$.  With deletion in hand, we can define \emph{restriction}: the restriction of $M$ to $X$ (denoted by $M|X$) is just $M - (E(M) \setminus X)$.  An operation that is dual to deletion is \emph{contraction}: the contraction of $X$ from $M$, denoted by $M / X$, is the matroid on ground set $E(M) \setminus X$ in which a set $I \subseteq E(M) \setminus X$ is independent if and only if there is a basis $B$ of $M | X$ such that $I \cup B$ is independent in $M$.  In fact, the choice of basis does not matter: let $B$ be an arbitrary basis for $M | X$.  Then a set $I$ is independent in $M / X$ if and only if $I \cup B$ is independent in $M$ (see~\cite[Proposition 3.1.8]{Oxley}).

\subsection{Classes of Binary Matroids} \label{sec:classes}
We will be particularly interested in a few subclasses of binary matroids.  Given a graph $G = (V,E)$, the \emph{cycle matroid} of $G$ is the matroid with ground set $E$ in which a subset $E'$ is independent if and only if the subgraph $(V, E')$ is acyclic.  A matroid $M$ is said to be \emph{graphic} if there is a graph $G$ such that $M$ is isomorphic to the cycle matroid of $G$.  It is easy to see that every graphic matroid is also binary.

A different matroid defined on a graph $G = (V,E)$ is the \emph{bond matroid}, in which the ground set is $E$ (as with the cycle matroid) and a set $E' \subseteq E$ is independent if and only if $G \setminus E'$ has the same number of connected components as $G$.  In other words, $E'$ is independent if it does not contain an edge cut.  A matroid $M$ is said to be \emph{cographic} if there is a graph $G$ such that $M$ is isomorphic to the bond matroid of $G$.  It is not hard to see that the bond matroid of a graph $G$ is the dual of the cycle matroid (hence the term cographic) under the usual definition of matroid duality (see~\cite[Section 2.3]{Oxley} for details).  As with graphic matroids, it is easy to see that cographic matroids are binary.

A \emph{regular} matroid is a binary matroid that is representable over every field (an equivalent definition of a regular matroid is a matroid represented over $\mathbb R$ by a totally unimodular matrix~\cite[Theorem 6.6.3]{Oxley}).  It is easy to see that graphic and cographic matroids are also regular.

A non-obvious generalization of regular matroids are the \emph{max-flow min-cut} (MFMC) matroids.  Given a matroid $M$ and an element $e \in E(M)$, let $A$ be the matrix in which the columns are indexed by the elements of $E(M) \setminus \{e\}$ and the rows are indexed by the circuits of $M$ that contain $e$.  Let $\mathcal C^*(M)$ denote the cocircuits of $M$ that include $e$ (a set is a cocircuit if it a circuit in the dual matroid).  We say that $M$ is \emph{$e$-MFMC} if for any nonnegative integral weight vector $w$ defined on $E(M) \setminus \{e\}$ we have $\min_{C \in \mathcal C^*(M)} w(C\setminus \{e\}) = \max \{{\bf 1}^T y : y^T A \leq w^T, y \geq 0 \text{ and integral}\}$, where we slightly abuse notation and let $w(C \setminus \{e\}) = \sum_{u \in C \setminus \{e\}} w_u$.    This definition was introduced by Seymour~\cite{Sey77} as a generalization of the max-flow min-cut theorem for graphs: note that in a graph, if we add an edge $e$ between nodes $u$ and $v$ and then use $e$ as the special element, $\min_{C \in \mathcal C^*(M)} w(C\setminus \{e\})$ is the minimum $u,v$ cut and $\max \{{\bf 1}^T y : y^T A \leq w^T, y \geq 0 \text{ and integral}\}$ is the maximum flow under (integral) capacities $w$.  Seymour proved that a matroid is $e$-MFMC if and only if it is binary and has no $F^*_7$ minor containing $e$ (where $F^*_7$ is the dual of the Fano matroid $F_7$)~\cite{Sey77}.  A matroid is \emph{MFMC} if it is $e$-MFMC for all $e \in E(M)$, and so Seymour's theorem implies that a matroid is MFMC if and only if it is binary and contains no $F_7^*$ minor.

\subsection{Matroid Sums} \label{sec:sums}
We now define the notion of sums of binary matroids that we will be using.  This definition is originally due to Seymour~\cite{Sey80}.  If $M_1$ and $M_2$ are binary matroids, we define a new binary matroid $M = M_1 \symd M_2$ on ground set $E(M_1) \symd E(M_2)$.  The cycles of $M$ are all subsets of the form $C_1 \symd C_2$, where $C_1$ is a cycle of $M_1$ and $C_2$ is a cycle of $M_2$.  It is not hard to see that this does indeed define a binary matroid~\cite{Sey80}, in which the circuits are the minimal non-empty cycles and the independent sets are (as always) the sets that do not contain any circuit.  As in~\cite{Sey80}, we are only concerned with three special cases of this.
\begin{enumerate}
\item When $E(M_1) \cap E(M_2) = \emptyset$ and $E(M_1), E(M_2) \neq \emptyset$ we call $M$ the \emph{1-sum} of $M_1$ and $M_2$ and write $M = M_1 \oneplus M_2$.
\item When $|E(M_1) \cap E(M_2)| = 1$, let $\{z\} = E(M_1) \cap E(M_2)$.  If $z$ is not a loop or coloop of $M_1$ or $M_2$ and $|E(M_1)|, |E(M_2)| \geq 3$ we call $M$ the \emph{2-sum} of $M_1$ and $M_2$ and write $M = M_1 \twoplus M_2$.
\item When $|E(M_1) \cap E(M_2)| = 3$, let $Z = E(M_1) \cap E(M_2)$.  If $Z$ is a circuit of $M_1$ and $M_2$, and $Z$ does not include a cocircuit of either $M_1$ or $M_2$, and $|E(M_1)|, |E(M_2)| \geq 7$, then we call $M$ the \emph{3-sum} of $M_1$ and $M_2$ and write $M = M_1 \threeplus M_2$.
\end{enumerate}

If a binary matroid $M$ equals $M_1 \oplus_k M_2$ for some $k \in \{1,2,3\}$, then we write $M = M_1 \oplus M_2$.  While the above definition may be slightly confusing at first, it is useful to gain intuition by thinking about the \emph{clique-sum} of graphs.  Suppose we have two graphs $G_1$ and $G_2$, each of which contains a clique with exactly $k$ vertices.  Then we can identify the vertices of these two cliques, and remove some subset of the edges inside the (now single) clique.  This forms a graph $G$ which is a \emph{$k$-clique-sum} of $G_1$ and $G_2$.  These sums play an important role in structural graph theory and especially graph minor theory, and many theorems can be proved about them.  For example, any graph of treewidth at most $k$ can be decomposed into the clique-sum of graphs with at most $k+1$ vertices~\cite{Lov06}.

It is easy to see that if we slightly change the definition of clique-sum to require removing \emph{all} of the edges in the clique, the cycle matroid of the $1$-clique-sum of two graphs $G_1$ and $G_2$ is exactly the $1$-sum of the cycle matroid of $G_1$ and the cycle matroid of $G_2$, i.e.~the matroid notion of $1$-sum generalizes the graph notion of $1$-clique-sum.  Similarly, the matroid notions of $2$- and $3$-sums generalize the graph notions of $2$- and $3$-clique-sums.  This allows us to get intuition from the graphical case: the sum of two matroids should behave ``like" the clique-sum of two graphs.

With this intuition, it is not difficult to see that the extra restrictions in the definition (for $2$-sums that $|E(M_1)|, |E(M_2)| \geq 3$ and $z$ is not a loop or coloop, and for $3$-sums that $|E(M_1)|, |E(M_2)| \geq 7$ and $Z$ does not include a cocircuit) are there simply to ensure non-triviality.  The size constraints force the sum to be larger than the two original matroids, and the loop/coloop/cocircuit constraints simply ensure that that $M$ cannot also be expressed as a smaller sum (possibly modulo some deletions and contractions).  For example, if $M = M_1 \threeplus M_2$ but $Z = E(M_1) \cap E(M_2)$ contains a cocircuit $C^*$, then it is not hard to see that the fact that $M_1$ and $M_2$ are binary implies that $|C^*| = 2$ and $M = (M_1 \setminus C^*) \twoplus (M_2 \setminus C^*)$.  These non-triviality properties are important when proving structural theorems about decompositions, but for our purposes they are not important.  One of the central pieces of our algorithm is a method for modifying decompositions into sums into different decompositions that are ``nicer", and while it is easy to see that after such a modification the extra conditions can be restored by further modifications, in order to simplify the exposition we will essentially ignore the extra conditions.  So we will abuse notation and say that $M = M_1 \oplus M_2$ even if the extra conditions are not satisfied, with the understanding that we can always get back the extra conditions later if they turn out to be necessary (which for us they will not).

Informally, the notion of matroid decomposition that we use is decomposition into sums, i.e.~a matroid can be decomposed into some set of matroids if we can take sums of those matroids to get back the original matroid.  We define a decomposition into sums formally as follows.

\begin{definition} \label{def:decomposition}
A \emph{\{1,2,3\}-decomposition} of a matroid $\tilde M$ is a set of matroids $\mathcal M$ (called the \emph{basic matroids}) and a rooted binary tree $T$ in which $\tilde M$ is the root and the leaves are the elements of $\mathcal M$, with the property that every internal vertex is either the $1$-, $2$-, or $3$-sum of its children.
\end{definition}

Given a $\{1,2,3\}$-decomposition $T$, there is another graph that will prove to be extremely useful:

\begin{definition}
Given a $\{1,2,3\}$-decomposition $(T, \mathcal M)$ of a matroid $\tilde M$, the \emph{conflict graph} $G_T$ of the decomposition has vertex set $\mathcal M$ and an edge between $M_1, M_2 \in \mathcal M$ if $E(M_1) \cap E(M_2) \neq \emptyset$.
\end{definition}

The main motivation for considering these sums is Seymour's decomposition theorem for regular matroids.  The special matroid $R_{10}$ can be represented over $GF(2)$ as the $10$ vectors in the five-dimensional vector space over $GF(2)$ that have exactly three nonzero entries.

\begin{theorem}[\cite{Sey80}] \label{thm:regular_decomposition}
Every regular matroid $M$ has a $\{1,2,3\}$-decomposition in which every basic matroid is either graphic, cographic, or is isomorphic to $R_{10}$.  Moreover, such a decomposition can be found in time polynomial in the number of elements of $M$.
\end{theorem}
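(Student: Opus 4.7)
The plan is to follow Seymour's original strategy: combine an excluded-minor characterization of regular matroids with the splitter theorem to reduce to a structural classification of $3$-connected regular matroids, then recurse on the sum decomposition by induction on $|E(M)|$. First I would handle low connectivity. If $M$ is disconnected then it is literally a $1$-sum of its connected components, so we recurse on each. If $M$ is connected but has a $2$-separation $(X, Y)$ with $|X|, |Y| \geq 2$, standard binary-matroid theory produces $M = M_1 \twoplus M_2$ where each $M_i$ is the restriction to $X$ or $Y$ augmented by a single connecting element; both $M_i$ are again regular and strictly smaller than $M$, so we recurse. After this reduction we may assume $M$ is $3$-connected.

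The main obstacle is the structural claim that every $3$-connected regular matroid is graphic, cographic, isomorphic to $R_{10}$, or contains $R_{12}$ as a minor. I would approach this by invoking Tutte's characterization of regular matroids as exactly the binary matroids with no $F_7$ or $F_7^*$ minor, together with Seymour's Splitter Theorem: if $N$ is a $3$-connected proper minor of a $3$-connected $M$ (and $M$ is not a wheel/whirl exception), then $M$ is built from $N$ by a sequence of single-element extensions and coextensions each preserving $3$-connectivity. Seeding with $N \in \{M(K_{3,3}), M(K_5), M^*(K_{3,3}), M^*(K_5), R_{10}\}$ and tracking which single-element extensions remain binary and avoid $F_7, F_7^*$ minors, one verifies that every $3$-connected regular matroid not covered by the first three categories must contain $R_{12}$ as a minor. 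This case analysis is the long technical core of~\cite{Sey80}, and I would cite it rather than reproduce it.

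Finally I would close the induction using $R_{12}$. The matroid $R_{12}$ carries a distinguished $3$-element subset $Z$ which is a circuit, contains no cocircuit, and induces an exact $3$-separation of $R_{12}$ into two pieces of size six; this is its defining structural feature. When $M$ is $3$-connected regular and has $R_{12}$ as a minor, this $3$-separation lifts to an exact $3$-separation of $M$ itself (the technical content of the lifting step is again splitter theory, applied to propagate the separation across the extensions and coextensions produced above), yielding $M = M_1 \threeplus M_2$ with each $M_i$ regular and strictly smaller than $M$. Inducting on $|E(M)|$ then produces the decomposition tree whose leaves are graphic, cographic, or $R_{10}$, matching Definition~\ref{def:decomposition}. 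For the polynomial-time claim, graphic and cographic recognition are classically polynomial (Tutte; Bixby--Cunningham), recognizing $R_{10}$ is trivial since it has constant size, and detecting the $R_{12}$-induced $3$-separation together with producing the corresponding $3$-sum admits a polynomial-time algorithm due to Truemper~\cite{Tru92} with subsequent refinements. Since the recursion tree has at most $O(|E(M)|)$ internal nodes and each level performs only polynomial-time operations, the overall procedure runs in polynomial time.
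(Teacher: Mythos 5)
The paper does not actually prove this theorem---it is imported directly from Seymour~\cite{Sey80}, with the polynomial-time claim resting on algorithmic implementations in the spirit of Truemper~\cite{Tru90,Tru92}---and your outline faithfully reproduces the structure of Seymour's own argument: reduce to the $3$-connected case via $1$- and $2$-sums, invoke the splitter-theorem-based classification that a $3$-connected regular matroid is graphic, cographic, isomorphic to $R_{10}$, or has an $R_{12}$ minor, and use the induced $3$-separation coming from $R_{12}$ to split off a $3$-sum, citing the technical cores rather than reproving them, exactly as the paper does. One small slip worth noting: the operative feature of $R_{12}$ is its exact $3$-separation into two $6$-element sides (which lifts to any regular matroid with an $R_{12}$ minor); the shared triangle $Z$ of the resulting $3$-sum consists of newly introduced elements rather than a distinguished $3$-element circuit inside $R_{12}$ itself, but this mislabeling does not affect the correctness of your outline.
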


We note that this decomposition theorem is already known to have algorithmic implications: the fastest known algorithm for testing total unimodularity of a matrix runs in cubic time and is based on the ability to find such a decomposition if one exists~\cite{Tru90}.

While regular matroids are our motivation, we note that since our techniques are based on decompositions we can go a bit beyond regular matroids.  For example, MFMC matroids  can be decomposed (in polynomial time) into the $1$- and $2$-sums of regular matroids and copies of $F_7$~\cite[(7.6)]{Sey80}.  Since regular matroids can be further decomposed, this means that every MFMC matroid has a $\{1,2,3\}$-decomposition in which every basic matroid is either graphic, cographic, or isomorphic to $R_{10}$ or $F_7$.  Similarly, splitter theory~\cite{Sey80,Sey95} implies that the class of binary matroids that exclude an $F_7$ minor admit {1,2,3}-decompositions into regular matroids and copies of $F^*_7$ (the dual of $F_7$).

This type of decomposition and matroid sums more generally have been thoroughly studied by Truemper~\cite{Tru92}.  One particularly interesting aspect of this decomposition theorem is that there are already known constant-competitive algorithms for matroid secretary on graphic matroids (e.g.~\cite{BIK07,BDGIT09,KP09}) and on cographic matroids~\cite{Soto11}, and it is obviously trivial to be constant-competitive on matroids of constant size.  This means that if we can design a constant-competitive algorithm for 1-, 2-, and 3-sums of matroids for which we already have constant-competitive algorithms, then we have a constant-competitive algorithm for regular matroids.

\subsection{Algorithm Overview}

Before diving into the details, we give a brief overview of our algorithm and the intuition behind it.  Assuming that we have a good secretary algorithm for each basic matroid of a sum, how do we develop an algorithm that works for the sum?  Obviously we must use these algorithms in some way, but since we do not assume anything about how these algorithms work, we must use them in a black-box fashion.  We are forced to simply run the algorithms, but possibly on slightly modified matroids $M'_1$ and $M'_2$ (which still fall into classes for which we have good algorithms).  Note that if $M$ is the $2$- or $3$-sum of $M_1$ and $M_2$ then there is at least one element of each of $M_1, M_2$ that is not an element of $M$, so at a minimum we have to decide how to treat these ``fake" elements.

The most obvious thing to do is simply run the algorithm for each matroid on the set of elements that appear in the final matroid, i.e.~simply delete the fake elements and run our algorithm for $M_1$ on $M'_1 = M_1 |(E(M_1) \cap E(M))$ and our algorithm for $M_2$ on $M'_2 = M_2 |(E(M_2) \cap E(M))$, taking as our final solution the union of the two solutions.  Note that the element sets for $M'_1$ and $M'_2$ are disjoint, so we can just run the two algorithms in parallel in the obvious way (we run both algorithm simultaneously, and when seeing a new element we determine which matroid it belongs to and apply the algorithm for that matroid).  However, it is easy to see that the set we select might not be independent in $M$: suppose that $M = M_1 \twoplus M_2$ with $E(M_1) \cap E(M_2) = \{z\}$.  Then we might select a set $A_1 \subseteq E(M_1) \setminus \{z\}$ that is independent in $M_1$ but has the property that $A_1 \cup \{z\}$ is a circuit of $M_1$, and similarly we might select a set $A_2 \subseteq E(M_2) \setminus \{z\}$ that is independent in $M_2$ but $A_2 \cup \{z\}$ is a circuit of $M_2$.  Then by definition $A_2 \cup A_2$ is a cycle of $M$ and thus dependent (in fact, it is not hard to see based on the graphical intuition that $A_1 \cup A_2$ will in fact be a circuit of $M$).

The problem with this obvious approach was that by simply ignoring the fake element $z$, both $M_1$ and $M_2$ were able to select a set that, while it may have been independent, formed a circuit when $z$ was added.  This meant that the union of the two sets formed a circuit in the sum.  The obvious way to get around this would be for us to ``pretend" when running the two algorithms that $z$ is included, i.e.~force ourselves to pick a set from $M_1$ and a set from $M_2$ that are independent even with $z$ is included.  Technically, this corresponds to contracting $z$ in each matroid (i.e.~letting $M'_1 = M_1/\{z\}$ and letting $M'_2 = M_2 / \{z\}$).  More generally, instead of \emph{deleting} all fake elements, we could contract them.  It can be shown that this will always result in sets whose union is independent in $M$ (the intuition is clear from the graphical case), but unfortunately this is too pessimistic: this approach might not be able to recover enough value compared to $OPT$.  As an easy example of this, suppose that $M_1$ has an element $a$ that is parallel to $z$, i.e.~$\{a,z\}$ is a circuit in $M_1$.  Then we force ourselves to not include $a$ in our solution, but $\{a\}$ is clearly an independent set in $M$ and it could be the case that all of the weight is on the single element $a$.  Obviously this example can be worked around (since circuits of size $2$ are more of a technicality than a real problem), but more generally $M_1$ might have many fake elements (if there are many sums that involve elements of $M_1$), and contracting all of them might leave us with a matroid in which only the empty set is independent, even though $OPT$ is able to pick a large independent set from $M_1$.

To remedy this, note that we were being unnecessarily pessimistic: contracting $z$ in both $M_1$ and $M_2$ was sufficient to ensure that our eventual solution was independent in $M$, but it was not necessary.  In fact, it is not hard to see that if $z$ is contracted in \emph{either $M_1$ or $M_2$} the solution we return will be independent in $M$.  So we simply need to decide which matroid should contract $z$.  In the simple case of the sum of two matroids, this is easy -- we can just flip a coin and decide randomly.  But what should we do in larger sums?  It is not difficult to construct an instance in which randomly choosing is bad: for example, if $M_1$ has many fake elements, i.e.~is part of many sums.  Then in expectation about half of these fake elements will be contracted in $M_1$, which might still be an extremely large number.  They might even form a basis of $M_1$, which would imply that contracting them would still give the matroid in which only $\emptyset$ is independent.  In this case an adversary could put all of the value on elements in $E(M_1) \cap E(M)$, and we would not get a decent approximation.

So what we actually need is a way of assigning each element that is not in $E(M)$ (i.e.~the fake elements) to one of the (two) matroids in the sum that contain it.  We then contract it in the matroid to which it is assigned, and delete it from the other matroid containing it.  And we need to do this in a way that does not lose much of the total value.  As a way to build intuition, suppose that our decomposition is made up only of $2$-sums.  It turns out that in this case the conflict graph will always be a tree in which there is a bijection between edges and fake elements.  This means that we can root the tree arbitrarily, and for every fake element the two matroids that contain it are adjacent in this graph so one is the parent of the other.  Our algorithm is basically to have the child matroid contract the element and the parent matroid delete it.  For every sum in the decomposition the element summed along is contracted in one of the two matroids, so the eventual set we return is independent.  But since each matroid only contracts a single element, the value we get from that matroid is basically optimal (modulo some technical details involving circuits of size $2$).

While this algorithm works fine for $1$- and $2$-sums ($1$-sums change the conflict graph from a tree to a forest, and we can just root each tree in the forest arbitrarily), we run into problems when extending it to $3$-sums.  In particular, the conflict graph for a decomposition involving $3$-sums might not be a forest.  Fundamentally, this is because at some intermediate stage in the decomposition we might have summed along a circuit of size $3$ whose elements are not all contained in the same basic matroids.  With $2$-sums this was not an issue, since the sums are along a single element.  But with $3$-sums it is easy to construct examples where circuits of size $3$ are created at an intermediate stage that do not exist in any of the basic matroids.  Thus we need an extra step that, when given a decomposition involving $3$-sums for which the conflict graph is not a forest, modifies the decomposition so that the conflict graph becomes a forest while still being a valid decomposition.  We show how to do this by ``moving" fake elements between matroids in a consistent manner.  Then once we have a forest as a conflict graph, we simply root the trees and as before use the forest to coordinate which matroids should contract which fake elements.  This leads to a situation where every matroid only contracts at most $3$ elements, and it is easy to show that these contractions will not cost us too much.

\subsection{Lemmas about Sums}
We first state a few facts and prove a few simple lemmas about matroid sums.  We assume that many of these lemmas are already known, as they are generally quite simple and are not the main technical contribution of this paper, but we have not found a specific reference for them.

%
%
The first lemma basically states that if a set is contained in one of the original two matroids and does not touch the intersection of the two matroids that form the sum, then its independence is just determined by the matroid that contains it.

\begin{lemma} \label{lem:not_circuit}
Let $M = M_1 \oplus M_2$ with $E(M_1) \cap E(M_2) = A$, and let $Z \subseteq E(M_1) \setminus A$.  Then $Z$ is independent in $M$ if and only if it is independent in $M_1$.
\end{lemma}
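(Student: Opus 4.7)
The plan is to prove both directions by contradiction, using the binary-matroid criterion that a set is independent iff it contains no non-empty cycle. The $(\Rightarrow)$ direction is almost immediate: if $Z$ contains a circuit $C$ of $M_1$, then $C \subseteq Z \subseteq E(M_1) \setminus A$ is a cycle of $M_1$, and since $\emptyset$ is vacuously a cycle of $M_2$, $C = C \symd \emptyset$ is a cycle of $M$, so $Z$ contains a non-empty cycle of $M$, contradicting independence of $Z$ in $M$.

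For $(\Leftarrow)$, suppose $Z$ is independent in $M_1$ but contains a non-empty cycle $Y = C_1 \symd C_2$ of $M$, where $C_i$ is a cycle of $M_i$. The first step is to pin down where $C_1$ and $C_2$ live relative to $A$. Since $Y \cap A = \emptyset$, each $a \in A$ satisfies $a \in C_1 \iff a \in C_2$, so $C_1 \cap A = C_2 \cap A$. An element of $C_2 \setminus A$ would lie in $E(M_2) \setminus E(M_1)$ and hence in $Y$, contradicting $Y \subseteq E(M_1)$; so $C_2 \subseteq A$. Combining, $C_2 = C_1 \cap A$ and $C_1 = Y \sqcup C_2$ as a disjoint union, with $C_1$ a cycle of $M_1$.

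What remains is a case analysis on which cycles of $M_2$ fit inside $A$. For a $1$-sum, $A = \emptyset$ forces $C_2 = \emptyset$, so $Y = C_1$ is a non-empty cycle of $M_1$ inside $Z$, contradicting independence in $M_1$. For a $2$-sum, the sole element of $A$ is not a loop of $M_2$, so again $C_2 = \emptyset$ and the same contradiction arises. The main obstacle is the $3$-sum case: $A$ is itself a circuit of $M_2$, and since proper non-empty subsets of a circuit are independent, the only cycles of $M_2$ contained in $A$ are $\emptyset$ and $A$, leaving the troublesome possibility $C_2 = A$. To handle this, I would invoke Lemma~\ref{lem:partition} applied to the cycle $C_1 = Y \cup A$ of $M_1$ and its circuit $A$ of $M_1$: this yields a partition of $C_1$ into circuits of $M_1$ with $A$ as one block, so removing $A$ exhibits $Y$ as a disjoint union of circuits of $M_1$, hence a non-empty cycle of $M_1$ inside $Z$, again contradicting independence of $Z$ in $M_1$.
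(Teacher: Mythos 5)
Your proposal is correct and follows essentially the same route as the paper's proof: the easy direction via $W \symd \emptyset$, and the harder direction by writing the offending cycle as $X_1 \symd X_2$, forcing $X_2 \subseteq A$, and disposing of the $X_2 = A$ possibility with Lemma~\ref{lem:partition}. Your explicit case split on the $1$-, $2$-, and $3$-sum cases is slightly more careful than the paper's wording (which tacitly treats $A$ as a circuit), but the argument is the same.
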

\begin{proof}
We begin with the if direction, so let $Z$ be independent in $M_1$, and assume for contradiction that $Z$ is not independent in $M$, and let $W \subseteq Z$ be a circuit of $M$ contained in $Z$.  Since $Z$ is independent in $M_1$, so is $W$.  But since $W$ is a circuit of $M$, by definition it can be written as $X_1 \symd X_2$ where $X_1$ is a cycle of $M_1$ and $X_2$ is a cycle of $M_2$.  Note that $X_2 \subseteq A$ since $W \subseteq E(M_1)$, and thus $X_1$ consists of $W \cup X_2$.  And since $X_2$ must be a cycle of $M_2$ and $A$ is a circuit of $M_2$, either $X_2 = \emptyset $ or $X_2 = A$.  If $X_2 = \emptyset$ then we have a contradiction, since then $X_1 = W$ is independent in $M_1$ instead of being a cycle of $M_1$.  On the other hand, if $X_2 = A$ then we still have a contradiction.  In this case, since $X_1$ is a cycle containing $A$ by Lemma~\ref{lem:partition} it can be partitioned into circuits, one of which is $A$.  Thus $X_1 \setminus A = W$ is a cycle, so $W$ is dependent in $M_1$, contradicting its independence in $M_1$.  So $Z$ must be independent in $M$.

Now we prove the only if direction.  Let $Z$ be dependent in $M_1$.  Then it contains a circuit $W$ of $M_1$.  Let $X_1 = W$ and let $X_2 = \emptyset$.  Then $X_1$ is a cycle of $M_1$ and $X_2$ is a cycle of $M_2$, so $X_1 \symd X_2 = W$ is a cycle of $M$.  Thus $Z$ is dependent in $M$.
\end{proof}

The next lemma gives a condition on the creation of circuits of size $3$ that are in a sum but are split between the two component matroids.  Informally, it states that if a $3$-circuit is created by a sum, it must be formed of one element that formed a $2$-circuit with an element of the intersection, and two other elements that formed a $3$-circuit with the same element.

\begin{lemma} \label{lem:3circuit_creation}
Let $M = M_1 \oplus_k M_2$ (where $k \in \{2,3\}$) with $E(M_1) \cap E(M_2) = A$.  If $Z = \{z_1, z_2, z_3\}$ is a circuit in $M$ with $z_1 \in E(M_1)$ and $z_2, z_3 \in E(M_2)$ then there is an element $a \in A$ such that $\{z_1, a\}$ is a circuit of $M_1$ and $\{z_2, z_3, a\}$ is a circuit of $M_2$.
\end{lemma}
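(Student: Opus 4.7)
My plan is to unpack the definition of matroid sum by writing $Z$ as a symmetric difference of cycles in $M_1$ and $M_2$, exploit minimality of the circuit $Z$ to forbid loops among the relevant elements, and then either directly read off the required $a \in A$ or reduce to that case by symmetric-differencing against $A$.

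First, since $Z$ is a non-empty cycle of $M = M_1 \oplus_k M_2$, the definition of the sum supplies a cycle $X_1$ of $M_1$ and a cycle $X_2$ of $M_2$ with $Z = X_1 \symd X_2$. Sorting $Z$ by which component contains each element forces $X_1 = \{z_1\} \cup B$ and $X_2 = \{z_2, z_3\} \cup B$ for some common $B \subseteq A$, because the portions of $X_1$ and $X_2$ lying in $A$ must cancel in the symmetric difference. A key consequence of the minimality of $Z$ is a ``no loops'' observation: none of $z_1, z_2, z_3$ can be a loop of its component, because such a loop $\{z_i\}$ sits in $E(M_r) \setminus A$ and, by Lemma~\ref{lem:not_circuit}, would give a circuit of $M$ strictly inside $Z$. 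Elements of $A$ also cannot be loops of $M_1$ or $M_2$: for $k = 2$ this is part of the sum's definition, and for $k = 3$ the set $A$ is itself a circuit, so its singletons are independent. Consequently, any cycle of size $2$ or $3$ in either $M_i$ whose elements are all non-loops is itself a circuit (the other possible partitions into disjoint circuits all require loops or a $2$-subset of $A$ being dependent).

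Next I would split on $k$. For $k = 2$, $A = \{a\}$ and $B \in \{\emptyset, \{a\}\}$; the $B = \emptyset$ case would make $z_1$ a loop and is excluded, so $B = \{a\}$ and the cycles $X_1 = \{z_1, a\}$, $X_2 = \{z_2, z_3, a\}$ are the $2$- and $3$-circuits the lemma demands. For $k = 3$, I would use Lemma~\ref{lem:binary_basics}(3): since $A$ is a circuit of both $M_1$ and $M_2$, the sets $X_1 \symd A$ and $X_2 \symd A$ are again cycles of $M_1$ and $M_2$, and their symmetric difference is still $Z$. This swap replaces $B$ by $A \setminus B$, so without loss of generality $|B| \le 1$. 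Then $|B| = 0$ again contradicts the no-loops fact, and $|B| = 1$ with $B = \{a\}$ produces the required element exactly as in the $k=2$ case.

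The step I expect to require the most care is the symmetric-difference reduction for $k = 3$: I need to confirm that each $X_i \symd A$ really is a cycle of $M_i$ (via Lemma~\ref{lem:binary_basics}(3) applied to the disjoint circuit decomposition of $X_i$ together with the circuit $A$), that the operation flips $B$ to $A \setminus B$ cleanly so the $|B| \ge 2$ cases reduce to $|B| \le 1$, and that the no-loops argument uniformly rules out every degenerate partition into circuits, so the terminal size-$2$ and size-$3$ cycles are genuinely the circuits claimed.
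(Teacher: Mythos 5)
Your proof is correct, and for the harder $k=3$ case it takes a genuinely different route from the paper. Both arguments start identically: write $Z = X_1 \symd X_2$ and observe that the $A$-portions of $X_1$ and $X_2$ must coincide, so $X_1 = \{z_1\} \cup B$ and $X_2 = \{z_2,z_3\} \cup B$ with $B \subseteq A$, and both use Lemma~\ref{lem:not_circuit} to rule out $z_1,z_2,z_3$ being loops. For $k=3$, however, the paper proceeds by case analysis on $|X_1 \cap A| \in \{1,2\}$ and $|X_2 \cap A| \in \{1,2\}$, invoking the double-elimination property (Lemma~\ref{lem:binary_basics}(4)) applied to $X_1$ and $A$ to extract the element $a$, and then handling a further sub-case for $M_2$ in which $X_2$ splits into two $2$-circuits, resolved by two symmetric differences with $A$. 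Your normalization trick --- replacing $X_1$ and $X_2$ simultaneously by $X_1 \symd A$ and $X_2 \symd A$, which stay cycles by Lemma~\ref{lem:binary_basics}(3) and still have symmetric difference $Z$ --- collapses all of this: it flips $B$ to $A \setminus B$, so you may assume $|B| \le 1$, and then the no-loops observation immediately forces $X_1 = \{z_1,a\}$ and $X_2 = \{z_2,z_3,a\}$ to be circuits, with the same $a$ appearing in both automatically (something the paper must track explicitly across its sub-cases). What the paper's version buys is a slightly more robust $k=2$ endgame: to kill the partition $\{a\} \cup \{z_2,z_3\}$ you rely on $a$ not being a loop, which is part of the $2$-sum definition but is among the non-triviality conditions the paper later allows itself to drop, whereas the paper instead notes that $\{z_2,z_3\}$ would be a circuit of $M$ inside $Z$ via Lemma~\ref{lem:not_circuit}; under the definition as stated, both justifications are valid, and your argument is the cleaner of the two overall.
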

\begin{proof}
Since $Z$ is a circuit in $M$, by definition it equals $X_1 \symd X_2$ where $X_i$ is a cycle of $M_i$ for $i \in \{1,2\}$.  Thus $X_1 \setminus A = \{z_1\}$ and $X_2 \setminus A = \{z_2, z_3\}$.  We begin with the case of $k=2$.  In this case $A = \{a\}$.  If $X_1 = \{z_1\}$ then $\{z_1\}$ is a circuit of $M_1$ and thus by Lemma~\ref{lem:not_circuit} is also a circuit of $M$, which contradicts the assumption that $Z$ is a circuit of $M$.  So $X_1$ must equal $\{a,z_1\}$.  Thus $\{a, z_1\}$ is a circuit of $M_1$.  Since $Z = X_1 \symd X_2$ this implies that $X_2 = \{a, z_2, z_3\}$, so $\{a, z_2, z_3\}$ is a cycle of $M_2$.  If it is not a circuit then it can be partitioned into at least two circuits, which implies that either $\{z_1\}, \{z_2,\}$, or $\{z_1, z_2\}$ is a circuit of $M_2$ and so by Lemma~\ref{lem:not_circuit} a circuit of $M$.  This is a contradiction, since $Z$ is a circuit of $M$.  Thus $\{a, z_3, z_3\}$ is a circuit of $M_2$.

We now move to the case of $k=3$.  We first show that there is an $a \in A$ so that $\{z_1, a\}$ is a circuit in $M_1$.  Note that $X_1 \cap A \neq \emptyset$, since that would imply that $\{z_1\}$ is a circuit in $M_1$ and thus a circuit in $M$ (by Lemma~\ref{lem:not_circuit}), which would imply that $Z$ is not a circuit in $M$.  Similarly, $X_1 \cap A \neq A$ since that would obviously imply that $X_1$ is not a cycle: $X_1$ would be a strict superset of $A$ and thus not a circuit, and any partition would involve either a set with only one or two elements of $A$ (thus independent) or has $\{z_1\}$ is a circuit in $M_1$ and thus in $M$, contradicting $Z$ being a circuit.  If $|X_1 \cap A| = 2$ then $X_1$ must be a circuit, since any way of partitioning leaves an independent set consisting of either just one element of $A$ or both elements of $A \cap X_1$.  So we can apply double-elimination to $X_1$ and $A$, which implies that there is an element $a \in A$ so that there is a circuit of $M_1$ contained in $\{z_1, a\}$.  Since neither $z_1$ nor $a$ can be loops (or else $Z$ and $A$ would not be circuits), this implies that $\{z_1,a\}$ is a circuit in $M_1$ as claimed.  Finally, if $|X_1 \cap A| = 1$ then $X_1 = \{z_1, a\}$ so we are finished for the same reason.

Now we show that $\{z_2, z_3, a\}$ is a circuit of $M_2$.  We already showed that $|X_1 \cap A|$ is either $1$ or $2$, and clearly $X_2 \cap A = X_1 \cap A$.  If $|X_2 \cap A| = 1$ then $X_2 \cap A = \{a\}$, so $\{z_2, z_3, a\}$ must be a cycle.  And it is easy to see that it must be a circuit, since otherwise either $z_2, z_3$, or $a$ must be a loop, contradicting the facts that $Z$ is a circuit in $M$ and $A$ is a circuit in $M_2$.  If $|X_2 \cap A| = 2$, then by the previous paragraph $X_2$ must consist of $z_2, z_3$, and the two elements of $A \setminus \{a\}$ (call them $a'$ and $a''$).  If $X_2$ is a circuit of $M_2$ then applying double-elimination to $X_2$ and $A$ finishes the proof.  Otherwise, it must be the case that $\{z_2, a'\}$ is a circuit of $M_2$ and $\{z_3, a''\}$ is also a circuit of $M_2$.  Recall that by Lemma~\ref{lem:binary_basics} in a binary matroid the symmetric difference of any two distinct circuits is a nonempty cycle.  Applying this to $A$ and to $\{z_2, a'\}$  implies that $\{z_2, a, a''\}$ is a circuit, and then another application to $\{z_2, a, a''\}$ and $\{z_3, a''\}$ implies that $\{z_2, z_3, a\}$ is a circuit, as desired.
\end{proof}

Since an important part of our algorithm is modifying decompositions, we will need the following lemma as a basic tool.  It allows us to add an element that is parallel to an existing element.

\begin{lemma} \label{lem:adding_element}
Let $M$ be a binary matroid, let $a \in E(M)$ so that $a$ is not a loop, and let $z \not\in E(M)$.  Let $\mathcal C(M)$ denote the circuits of $M$.  Let $\mathcal C$ be a collection of subsets of $E(M) \cup \{z\}$ consisting of the union of the following three collections of sets:
\begin{enumerate}
\item $\mathcal C(M)$,
\item $\{\{z,a\}\}$, and
\item $\{(C \setminus \{a\}) \cup \{z\} : C \in \mathcal C(M), a \in C\}$.
\end{enumerate}
Then $\mathcal C$ forms the circuits of a binary matroid $M(z,a)$.
\end{lemma}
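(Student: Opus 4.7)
My plan is to prove this by explicitly constructing $M(z,a)$ as the vector matroid of a $GF(2)$-matrix obtained from a representation of $M$ by duplicating the column for $a$, and then verifying that the circuits of this vector matroid coincide with the family $\mathcal{C}$ described in the statement.

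More concretely, since $M$ is binary, I would fix a matrix $A$ over $GF(2)$ whose vector matroid is $M$, and form $A'$ by adjoining to $A$ an extra column, labeled $z$, equal to the column of $A$ labeled $a$. Let $M(z,a)$ be the vector matroid of $A'$; this is binary by construction, so it suffices to show that its circuits form exactly the family $\mathcal{C}$. I would use the standard fact that over $GF(2)$ a set of columns is a circuit iff it sums to the zero vector and no proper nonempty subset does, together with the crucial observation that swapping $a$ for $z$ (or vice versa) inside any index set does not change the sum of the corresponding columns.

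With those tools the argument splits into four cases according to which of $a$ and $z$ belong to a candidate circuit $C'$ of $M(z,a)$. If neither does, then $C'$ is a circuit entirely inside the original columns, hence coincides with a circuit of $M$ not involving $a$, giving a subset of $\mathcal{C}(M)$ in item (1). If $a \in C'$ but $z \notin C'$, the same argument shows $C'$ is a circuit of $M$ (now containing $a$), which is again in item (1). If $z \in C'$ but $a \notin C'$, I would transport $C'$ to $(C' \setminus \{z\}) \cup \{a\}$, observe that the column sums are identical, and use minimality of $C'$ in $M(z,a)$ together with the invariance under the swap to deduce that $(C' \setminus \{z\}) \cup \{a\}$ is a circuit of $M$ containing $a$, matching item (3); the reverse direction is symmetric. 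Finally, if both $a$ and $z$ lie in $C'$, then since the columns for $a$ and $z$ are equal the set $C' \setminus \{a,z\}$ already sums to zero, so minimality forces $C' \setminus \{a,z\} = \emptyset$, i.e.\ $C' = \{a,z\}$; and conversely $\{a,z\}$ is a circuit because $a$ is not a loop, so neither is $z$. This exactly produces item (2).

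The only genuinely delicate point is verifying minimality in the mixed case where a candidate circuit contains $z$ but not $a$ (or vice versa): a proper dependent subset in $M(z,a)$ could contain $z$, and one has to translate it back through the swap into a proper dependent subset of a circuit of $M$, contradicting minimality there. I expect this to be the main obstacle, but it is essentially a one-line argument using that the column sum is preserved when $z$ is replaced by $a$. Everything else is bookkeeping, and no appeal to circuit-elimination or symmetric-difference axioms is needed since binarity is obtained for free from the matrix construction.
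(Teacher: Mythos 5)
Your proposal is correct and is essentially the paper's own proof: the paper also defines $M(z,a)$ by taking a $GF(2)$ representation of $M$ and assigning $z$ the same column as $a$, and then asserts that $\mathcal C$ is exactly the circuit set of the resulting vector matroid. You simply spell out the case analysis (the swap-invariance of column sums and the minimality transfer) that the paper leaves implicit, and that verification is sound.
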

\begin{proof}
Consider the binary matroid defined by using a $GF(2)$ representation of $M$ and assigning $z$ the same vector as $a$.  Then $\mathcal C$ is exactly the circuit set of this matroid.
\end{proof}

Note that this method of adding parallel edges is essentially unique: it is not hard to see that given a matroid $M$ and an element $e \in E(M)$, adding an element $z$ and including $\{z,e\}$ as a circuit uniquely determines the independent sets and circuits  containing $z$.  Moreover, it is also easy to see that the relation of being parallel defines equivalence classes which partition the elements.

We will now use Lemma~\ref{lem:adding_element} to show that in a sum $M = M_1 \oplus M_2$, if an element of $M$ is parallel to an element of $E(M_1) \cap E(M_2)$ in $M_1$, then we can ``move" it to $M_2$ in the obvious way without affecting the decomposition.

\begin{lemma} \label{lem:moving_parallel}
Let $M = M_1 \oplus_k M_2$ for $k \in \{2,3\}$ with $E(M_1) \cap E(M_2) = A$.  Let $z \in E(M_1) \setminus A$ and $a \in A$ be elements such that $\{z,a\}$ is a circuit of $M_1$.  Then $M = (M_1 \setminus \{z\}) \oplus_k M_2(z,a)$.
\end{lemma}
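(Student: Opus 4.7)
The plan is to prove equality of the two matroids by showing that they have the same set of cycles. Since both are binary matroids (and sums are defined via cycles), this suffices. First I would verify the trivial bookkeeping: the ground set of $(M_1 \setminus \{z\}) \oplus_k M_2(z,a)$ is $(E(M_1)\setminus\{z\}) \symd (E(M_2)\cup\{z\}) = E(M_1) \symd E(M_2) = E(M)$, and the intersection of the two new ground sets is still $A$, which remains a circuit of each side (circuits of $M_2$ are still circuits of $M_2(z,a)$ by Lemma~\ref{lem:adding_element}, and $A \not\ni z$ so $A$ remains a circuit of $M_1\setminus\{z\}$).

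For the cycle-set equality, the key trick is to toggle the circuit $\{z,a\}$ across the sum. For the forward direction, let $C = X_1 \symd X_2$ be a cycle of $M = M_1 \oplus_k M_2$, with $X_i$ a cycle of $M_i$. If $z \notin X_1$, then $X_1$ is a cycle of $M_1\setminus\{z\}$ and $X_2$ is a cycle of $M_2$, hence of $M_2(z,a)$, so $C$ is a cycle of the new sum. If $z \in X_1$, set $X_1' = X_1 \symd \{z,a\}$ and $Y_2 = X_2 \symd \{z,a\}$. Then $X_1' \symd Y_2 = C$ and $z \notin X_1'$, so $X_1'$ is a cycle of $M_1 \setminus \{z\}$ by Lemma~\ref{lem:binary_basics}(3). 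For $Y_2$, I split on whether $a \in X_2$: if $a \in X_2$, partition $X_2$ into circuits of $M_2$ (with one, say $C_1$, containing $a$); then Lemma~\ref{lem:adding_element} guarantees $(C_1 \setminus \{a\}) \cup \{z\}$ is a circuit of $M_2(z,a)$ while the other pieces are unchanged, yielding a cycle partition of $Y_2 = (X_2 \setminus\{a\})\cup\{z\}$. If $a \notin X_2$, then $Y_2 = X_2 \cup \{z,a\}$ is a disjoint union of the circuits partitioning $X_2$ with the new circuit $\{z,a\}$.

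The reverse direction is symmetric. Let $Y_1 \symd Y_2$ be a cycle of $(M_1\setminus\{z\}) \oplus_k M_2(z,a)$. If $z \notin Y_2$, then $Y_2$ is a cycle of $M_2(z,a)$ consisting only of elements of $E(M_2)$, and by the description of circuits in Lemma~\ref{lem:adding_element} any such cycle is already a cycle of $M_2$; pair it with $Y_1$, a cycle of $M_1$. If $z \in Y_2$, take $X_1 = Y_1 \symd \{z,a\}$ and $X_2 = Y_2 \symd \{z,a\}$: then $X_1$ is a cycle of $M_1$ by Lemma~\ref{lem:binary_basics}(3), and $X_2$ is a cycle of $M_2(z,a)$ not containing $z$, hence a cycle of $M_2$ by the same characterization of $M_2(z,a)$'s circuits. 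Their symmetric difference equals $Y_1 \symd Y_2$, establishing it as a cycle of $M$.

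The only real subtlety is the case analysis on whether $a \in X_2$ (respectively $a \in Y_2$) when transporting $X_2$ to $Y_2$ across the parallel-element move, and making sure the resulting set really is a \emph{cycle} (a disjoint union of circuits), not just some symmetric difference. Lemma~\ref{lem:adding_element}'s explicit description of the circuits of $M_2(z,a)$ is what powers this, together with the binary-matroid identity that symmetric differences of cycles are cycles (Lemma~\ref{lem:binary_basics}(3)). No further manipulation should be needed.
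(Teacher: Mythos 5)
Your proof is correct and takes essentially the same route as the paper: both arguments transfer $z$ across the sum by symmetric-differencing the relevant cycles with the parallel circuit $\{z,a\}$ and then use the explicit circuit description of $M_2(z,a)$ (Lemma~\ref{lem:adding_element}) together with Lemma~\ref{lem:binary_basics}(3) to certify the results are cycles. The only cosmetic difference is that the paper disposes of the reverse inclusion via the symmetry $M_1 = (M_1\setminus\{z\})(z,a)$ and $M_2 = M_2(z,a)\setminus\{z\}$, whereas you redo the toggle directly.
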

\begin{proof}
Let $C$ be a circuit of $M$, and thus by definition equal to $X_1 \symd X_2$ where $X_1$ is a cycle of $M_1$ and $X_2$ is a cycle of $M_2$.  If $z \not\in C$ then obviously $C$ is also a circuit of $(M_1 \setminus \{z\}) \oplus_k M_2(z,a)$, so suppose that $z \in C$.  Then $z \in X_1 \setminus A$.  Fix any partition of $X_1$ into circuits of $M_1$, and let $B \subseteq X_1$ be the circuit of $M_1$ containing $z$ in this partition.  If $a \not\in X_1$, then let $X'_1 = X_1 \setminus \{z\} \cup \{a\}$.  Clearly $X'_1$ is a cycle of $M_1 \setminus \{z\}$: by the circuit elimination axiom $B \setminus \{z\} \cup \{a\}$ is a circuit of $M_1$ and thus of $M_1 \setminus \{z\}$, and therefore $X'_1$ is a cycle of $M_1 \setminus \{z\}$.  Let $X'_2 = X_2 \cup \{a,z\}$.  Since $X_2$ is a cycle of $M_2$ and $\{a,z\}$ is a circuit of $M_2(z,a)$, we know that $X'_2$ is a cycle of $M_2(z,a)$.  Thus $X'_1 \symd X'_2 = X_1 \symd X_2 = C$ is a cycle of $(M_1 \setminus \{z\}) \oplus_k M_2(z,a)$.

On the other hand, suppose that $a \in X_1$ (and so $a \in X_2$ as well).  By Lemma~\ref{lem:partition} we can assume that there is a partition of $X_1$ into circuits of $M_1$ so that $\{a,z\}$ is a circuit in this partition, and thus $\{a,z\} = B$.  Let $X'_1 = X_1 \setminus \{a,z\}$; clearly $X'_1$ is a cycle of $M_1 \setminus \{z\}$.  Let $X'_2 = X_2 \setminus \{a\} \cup \{z\}$.  Let $D' \subseteq X_2$ be the circuit of $M_2$ that contains $a$ in the decomposition of $X_2$ into circuits.  Then by definition $D' \setminus \{a\} \cup \{z\}$ is a circuit of $M_2(z,a)$, and thus $X'_2$ is a cycle of $M_2(z,a)$.  Thus $X'_1 \symd X'_2 = X_1 \symd X_2 = C$ is a cycle of $(M_1 \setminus \{z\}) \oplus_k M_2(z,a)$.

So now we know that any circuit of $M = M_1 \oplus_k M_2$ is also a cycle (and thus dependent) in $(M_1 \setminus \{z\}) \oplus_k M_2(z,a)$.  We now want to prove that a circuit of $(M_1 \setminus \{z\}) \oplus_k M_2(z,a)$ is dependent in $M$, which would clearly imply the lemma.  But we've essentially already done this:  it is easy to see that $M_1 = (M_1 \setminus \{z\})(z,a)$ and $M_2 = M_2(z,a) \setminus \{z\}$, so we can apply the above argument where we replace $M_1$ by $M_2(z,a)$ and replace $M_2$ by $M_1 \setminus \{z\}$.  Thus by the above argument applied to $M_2(z,a) \oplus_k (M_1 \setminus \{z\})$, any circuit of $ M_2(z,a) \oplus_k (M_1 \setminus \{z\})$ is also a circuit of $ (M_2(z,a) \setminus \{z\}) \oplus_k (M_1 \setminus \{z\})(z,a) = M_2 \oplus_k M_1 = M_1 \oplus_k M_2$.
\end{proof}

\section{Modification of Decomposition}

In this section we show how to modify the decomposition in order to ensure that certain desirable properties hold.  In particular, we will first show how to modify a given decomposition in order to make the conflict graph a forest.  We will actually prove the slightly stronger condition that for any $3$-sum (which by definition is summed along a circuit of size $3$), the circuit in the intersection is contained entirely in two of the lowest-level matroids.  In other words, while the process of summing matroids might create new circuits that contain elements that started out in different matroids, any circuit that is used as the intersection of a sum existed from the very beginning.

Once the conflict graph of the decomposition is a forest, arbitrarily rooting each tree in the forest provides a way of coordinating which matroid should contract which elements so that each matroid contracts either a single element (corresponding to a $2$-circuit) or three elements that form a circuit (corresponding to a $3$-circuit).  While this seems like a small change, in order to ensure that it does not cost us much in the value we need to modify the decomposition a little bit more in order to ensure that there are no elements in parallel with any of the contracted elements.  This can easily be accomplished by using Lemma~\ref{lem:moving_parallel} to move any parallel element to the other matroid in the sum.  We now formalize these modifications.

Suppose we are given a binary matroid $\tilde M$ and  a $\{1,2,3\}$-decomposition $(T,\mathcal M)$ of $\tilde M$.  Let $V_T = V(T)$ be the vertex set of $T$, and let $\tilde V_T = V_T \setminus \mathcal M$ denote the internal vertices of this tree.  We will sometimes abuse notation and refer to both a vertex in $V_T$ and the matroid that it represents by the same value, e.g.~$M$.

Note that each we can associate each internal vertex in $\tilde V$ not just with a matroid $M$, but also with the set of elements that is the intersection of the ground sets of its children (and is thus not in the ground set of $M$).  This set is either the empty set (if $M$ is the $1$-sum of its children), a single element (if it is the $2$-sum), or three elements that form a circuit in both of its children (if it is the $3$-sum).  For an internal node $M$, let $Z_M$ denote this set.
We will say that a set in $\mathcal Z$ is \emph{summed over} or \emph{summed along}, and call such a set a \emph{sum-set}.

Let $Z = Z_M = \{z_1, z_2, z_3\}$ be a sum-set for an internal vertex $M$ with children $M_1$ and $M_2$ (since $|Z_M| = 3$ we know that $M = M_1 \threeplus M_2$).  We associate two other matroids in $V_T$ with $Z$, one of which will be a descendant of $M_1$ and the other of which will be a descendant of $M_2$.  By definition $Z \subseteq E(M_1)$ and $Z \subseteq E(M_2)$, and since each element of a matroid is an element in exactly one of its children each element of $Z$ appears in exactly two basic matroids, one of which is a descendant of $M_1$ and the other of which is a descendant of $M_2$.  Let $M_1(Z)$ be the least-common ancestor of the basic matroids descended from $M_1$ that contain $Z$, i.e.~$M_1(Z)$ is the lowest-level descendent of $M_1$ that actually contains all of $Z$ (so one element of $Z$ is in one child of $M_1(Z)$ and the other two are in the other child).  Similarly, let $M_2(Z)$ be the LCA of the basic matroids descended from $M_2$ that contain $Z$.  We call $M_1(Z)$ and $M_2(Z)$ the \emph{creation matroids} of $Z$.

For an internal vertex/matroid $M \in \tilde V_T$, we say that a set of elements of $M$ is \emph{basic} if the elements are all from the same basic matroid; otherwise the set is \emph{non-basic}.  A sum-set $Z_M$ is in $\mathcal B$ if it has size $3$ (and thus is a circuit of size $3$ in both of the children of $M$) and if it is non-basic in at least one child of $M$.  We say that such a sum-set is \emph{bad}, and we call a $\{1,2,3\}$-decomposition \emph{good} if no sum-set is bad.  In other words, a sum-set is bad if at least one of its creation matroids is not a basic matroid, and a good decomposition is one in which the creation matroids for every sum-set are both basic matroids.

Our first lemma is that we can always modify the decomposition to become good.  The intuition behind the proof is simple: if there is a bad sum-set $Z$, then we consider a creation matroid for $Z$ that is not basic (at least one of the two creation matroids must be non-basic by the definition of a bad sum-set).  Let $M$ be this creation matroid with children $M_1$ and $M_2$, so that one element $z_1 \in Z$ is in $M_1$ and the other two elements $z_2, z_3$ are in $M_2$.  We then ``move" $z_1$ from $M_1$ to $M_2$: Lemma~\ref{lem:3circuit_creation} implies that there is some element $a \in E(M_1) \cap E(M_2)$ so that $z_1$ and $a$ are parallel in $M_1$, so we can use Lemma~\ref{lem:moving_parallel} to move $z_1$ from $M_1$ to $M_2$.  Now $Z$ is entirely present in $M_2$, which is a child of $M$, so we have made progress: if $Z$ is now basic then we have one less bad sum-set, and if not then we have moved its creation ``down" so we can repeat this until it eventually becomes basic.  We make this formal in the next lemma by formalizing this notion of ``progress".

\begin{lemma} \label{lem:no_bad}
Given a $\{1,2,3\}$-decomposition $T$, in polynomial time we can find a $\{1,2,3\}$-decomposition that has no bad sum-sets and in which all of the basic matroids are either unchanged or have some elements removed and some additional elements added parallel to existing elements.
\end{lemma}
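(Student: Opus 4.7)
The plan is to eliminate bad sum-sets one at a time, each time using Lemmas~\ref{lem:3circuit_creation} and~\ref{lem:moving_parallel} to ``push'' a stray element across one internal sum so that its sum-set becomes confined to one fewer branch. Formally, for each bad sum-set $Z$ I will define a local progress measure $s(Z)$, show that a single application of Lemma~\ref{lem:moving_parallel} strictly decreases $s(Z)$, show that other sum-sets are unaffected, and verify that the only modifications to basic matroids throughout the process are deletions and additions of parallel elements.

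Define $s(Z)$ to be the total number of internal (non-leaf) nodes in the subtrees of $T$ rooted at $M_1(Z)$ and $M_2(Z)$, and set $\Phi(T) = \sum_{Z \in \mathcal{B}} s(Z)$. Then $\Phi(T) \ge 0$, with equality iff both creation matroids of every sum-set are basic, i.e.~iff the decomposition is good. For the iteration, pick any bad sum-set $Z = \{z_1,z_2,z_3\}$ and WLOG assume $M_1(Z)$ is non-basic. Since $M_1(Z)$ is the LCA of basic matroids below $M_1$ containing elements of $Z$, it must split $Z$ across its two children; write $M_1(Z) = N_1 \oplus_k N_2$ with $z_1 \in E(N_1)\setminus E(N_2)$ and $z_2,z_3 \in E(N_2) \setminus E(N_1)$ (here $k \in \{2,3\}$, since a $1$-sum shares no elements and could not split $Z$ this way). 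Because $Z$ is a circuit of $M_1(Z)$, Lemma~\ref{lem:3circuit_creation} supplies an element $a \in E(N_1) \cap E(N_2)$ with $\{z_1,a\}$ a circuit of $N_1$ and $\{z_2,z_3,a\}$ a circuit of $N_2$. Apply Lemma~\ref{lem:moving_parallel} to replace the children $N_1, N_2$ of $M_1(Z)$ in $T$ by $N_1 \setminus \{z_1\}$ and $N_2(z_1,a)$. By that lemma the matroid at the $M_1(Z)$ node is unchanged, so the decomposition still decomposes $\tilde M$.

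Three things then need checking. First, the modification propagates coherently to the basic matroids: $z_1$ lies in exactly one basic matroid $B_1$ below $N_1$ and in exactly one basic matroid $B_2$ below $N_2$, and (crucially) $z_1$ appears in no sum-set other than $Z$ itself, because below $M_1(Z)$ it lies in only one branch at every split. Thus the propagation simply deletes $z_1$ from $B_1$ and, via Lemma~\ref{lem:adding_element}, adds $z_1$ to $B_2$ parallel to $a$; every other basic matroid is untouched. Iterating, the basic matroids only ever undergo deletion and parallel-element addition, as claimed. Second, sum-sets at other nodes are preserved: each such sum-set is an intersection of children's ground sets, and since $z_1$ occurs in only one branch below $M_1(Z)$, removing it from (resp.\ adding it to) one side leaves those intersections unchanged; moreover deletion and parallel-addition both preserve circuits disjoint from $\{z_1\}$, so the intermediate sum-sets remain circuits and the creation matroids of all $Z' \neq Z$ are unchanged, giving $s(Z')$ unaffected. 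Third, for $Z$ itself, the new $M_1(Z)$ lies inside the subtree rooted at $N_2(z_1,a)$ (since $Z \subseteq E(N_2(z_1,a))$), a strict descendant of the old $M_1(Z)$; hence $s(Z)$ strictly decreases.

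Consequently $\Phi$ strictly decreases with each iteration and is bounded by $|\mathcal{B}|\cdot|V_T|$, so the process terminates in polynomially many steps. Each iteration (choosing a bad $Z$, finding $a$ via the proof of Lemma~\ref{lem:3circuit_creation}, and applying Lemma~\ref{lem:moving_parallel}) runs in polynomial time. The main technical obstacle is the second check above: ensuring that a local modification at one internal node does not cascade into spurious changes at unrelated sum-sets or break the validity of intermediate $k$-sums. The argument hinges on the observation that $z_1$ lives on only two root-to-leaf paths below $M_1(Z)$ and participates in no other sum-set, so the apparent ``surgery'' on the decomposition is genuinely local, even though it is being performed on a potentially large matroid sitting high in the tree.
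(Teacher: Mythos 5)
Your overall scheme is the same as the paper's (locate a non-basic creation matroid, use Lemma~\ref{lem:3circuit_creation} to find $a$, move $z_1$ across via Lemma~\ref{lem:moving_parallel}, and drive a potential function; your subtree-counting potential differs only cosmetically from the paper's). However, there are two concrete problems. First, you invoke Lemma~\ref{lem:3circuit_creation} ``because $Z$ is a circuit of $M_1(Z)$'' without ever establishing that fact, and your stated reason for excluding the $1$-sum case is wrong: a $1$-sum has disjoint ground sets, so it certainly \emph{can} have $z_1$ in one child and $z_2,z_3$ in the other -- disjointness does not prevent the split you describe. What actually rules out the $1$-sum, and what licenses the application of Lemma~\ref{lem:3circuit_creation}, is that $Z$ is a circuit of $M_1(Z)$ itself; this is not automatic from the definition of a sum-set (which only makes $Z$ a circuit of the two children of the node where it is summed) and needs the inductive argument the paper gives via Lemma~\ref{lem:not_circuit}: $Z$ is a circuit of the child of $M'$ above $M_1(Z)$, it avoids every intermediate intersection, and hence remains a circuit all the way down to $M_1(Z)$; a circuit cannot be split by a $1$-sum, so $k\in\{2,3\}$. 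Without this step your iteration is not justified.

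Second, your description of the propagation is based on a false statement: you claim $z_1$ lies in exactly one basic matroid $B_1$ below $N_1$ and exactly one basic matroid $B_2$ below $N_2$, but by your own setup $z_1\in E(N_1)\setminus E(N_2)$, so $z_1$ appears in \emph{no} basic matroid below $N_2$; its second occurrence is in a basic matroid below the \emph{other} creation matroid $M_2(Z)$, which is untouched. The matroid that should receive the new parallel copy of $z_1$ is the basic matroid below $N_2$ that contains $a$, and the copy must also be inserted (parallel to $a$) in every internal node on the path from $N_2$ down to that basic matroid, while $z_1$ is deleted from every node on the path from $N_1$ down to $B_1$; one then checks these local changes keep every intermediate node equal to the sum of its modified children. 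Your construction is repairable along these lines (and your observation that $z_1$ lies in no sum-set strictly below $M_1(Z)$ is the right ingredient for that check), but as written the propagation step rests on a misidentification of where $z_1$ lives.
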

\begin{proof}
Let $f: V_T \rightarrow [|V_T|]$ be a bijection between $V_T$ and $[|V_T|]$ (i.e.~a numbering of the vertices of $T$) with the property that $f(M)$ is larger than both $f(M_1)$ and $f(M_2)$, where $M_1$ and $M_2$ are the children of $M$ in $T$.  We can intuitively think of $f(M)$ as the ``time" that $M$ is created by taking the sum of its children.  For each bad sum-set $Z$ let $g(Z) = f(M) + f(M')$, where $M$ and $M'$ are the two creation matroids for $Z$.  We let $\Phi(T) = \sum_{Z \in \mathcal B} g(B)$ be a potential function, and note that it is always at least $0$ (by convention the sum over an empty set is $0$) and is at most $|\tilde V_T| \cdot 2|V_T|$ (since $|\mathcal B| \leq |\tilde V_T|$ and each set $Z \in \mathcal B$ has $g(Z) \leq 2|V_T|$).

We will show how to decrease $\Phi$ by at least $1$ as long as $\Phi$ is greater than zero, which obviously implies that a polynomial number of repetitions will result in $\mathcal B$ being empty.  Let $Z \in \mathcal B$ be an arbitrary bad sum-set.  Let $Z = \{z_1, z_2, z_3\}$.  Let $M$ be a creation matroid of $Z$ that is not a basic matroid (such a creation matroid must exist by the definition of $\mathcal B$), and let its children be $M_1$ and $M_2$.  By the definition of $M$ we know that $Z \not \subseteq E(M_1)$ and $Z \not\subseteq E(M_2)$, so without loss of generality we will assume that $z_1 \in E(M_1)$ and $z_2, z_3 \in E(M_2)$.  Note that since $Z$ is a sum-set there is an ancestor $M'$ of $M$ for which $Z = Z_{M'}$, and thus $Z$ is a circuit in the child of $M'$ that is an ancestor of $M$.  Then Lemma~\ref{lem:not_circuit} inductively implies that $Z$ is a circuit of $M$.  In particular, this implies that that $M$ cannot be the $1$-sum of $M_1$ and $M_2$.   Lemma~\ref{lem:3circuit_creation} then implies that there is an element $a \in Z_M$ such that $\{z_1, a\}$ is a circuit in $M_1$ and $\{z_2, z_3, a\}$ is a circuit in $M_2$.

Our modification of the decomposition involves moving $z_1$ from $M_1$ to $M_2$.  Let $M'_1 = M_1 \setminus \{z_1\}$, and let $M'_2 = M_2(z_1,a)$ (recall this involves adding $z_1$ to be parallel to $a$, as detailed in Lemma~\ref{lem:adding_element}).  Lemma~\ref{lem:moving_parallel} implies that $M'_1 \oplus M'_2 = M$, i.e.~this change did not affect the sum.

We now have to define the decompositions of $M'_1$ and $M'_2$, but this is trivial.  For $M'_1$, we simply remove $z_1$ from all of its descendants.  For $M'_2$, note that there is a unique path in $T$ from $M_2$ to the basic matroid that contains $a$, and all matroids on this path also contain $a$.  We simply include $z_1$ in these matroids in the obvious way, as we did with $M_2$: if $H$ is a matroid in this path then we set $H' = H(z_1,a)$.  It is easy to see that this forms a valid $\{1,2,3\}$-decomposition of $M'_2$, and thus we still have a valid $\{1,2,3\}$-decomposition $T'$ of $\tilde M$.  Note that $T'$ is isomorphic to $T$, i.e.~the vertices of the decomposition are unchanged; the only thing that has changed is the matroid corresponding to each vertex.  Thus we can use the same function $f$ on the vertices of $T'$.

To see that we made progress, consider $\Phi(T')$ of this new decomposition.  Note that $\mathcal Z$ is unchanged: any set that was a sum-set is still a sum-set, and vice versa.  And since all we did was move $z_1$ around, any bad sum-set other than $Z$ is still a bad sum-set, and any sum-set that was not bad is still not bad.  Moreover, the creation matroids for each bad sum-set are still the same vertices in the decomposition, so every bad sum-set other than $Z$ has the same $g$-value.  Thus if $Z$ is no longer in $\mathcal B$ then obviously $\Phi(T') < \Phi(T)$.  Otherwise, observe that in $T'$ the $g$-value of $Z$ has decreased, since of the two creation matroid for $Z$ one of them is unchanged and the other has changed from $M$ to either $M'_2$ or some descendent of $M'_2$, and $f(M'_2) = f(M_2) < f(M)$.  Thus $\Phi(T') < \Phi(T)$.
\end{proof}

A possibly simpler proof that we can get a good decomposition for regular matroids would be to analyze the algorithm for constructing such a decomposition, which works by checking whether every basic matroid is either graphic, cographic, or isomorphic to $R_{10}$, and if not then finding a way of decomposing it further.  It is easy to see that we can maintain the invariant that no sum-sets are bad.  However, this has the downside of being particular to the algorithm for constructing a decomposition of a regular matroid.  Our lemma is more general, since it allows us to work on matroids that (for example) are represented by giving us the decomposition.  This might be the case if we are working with basic matroids that form a class for which there is no efficient testing algorithm.

The following lemma gives the major reason why we want a good decomposition: it implies that the conflict graph is a forest.

\begin{lemma} \label{lem:forest}
If $T$ is a good $\{1,2,3\}$-decomposition of $\tilde M$, then $G_T$ is a forest.
\end{lemma}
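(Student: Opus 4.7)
My plan is to prove the forest property by induction on the tree $T$. For each node $M \in V_T$, let $L(M) \subseteq \mathcal M$ denote the basic matroids that are leaves of the subtree of $T$ rooted at $M$, and let $G_{T,M}$ denote the subgraph of $G_T$ induced by $L(M)$. I will show that $G_{T,M}$ is a forest for every $M$; taking $M = \tilde M$ yields the lemma. The base case is trivial: a leaf $M \in \mathcal M$ has $L(M) = \{M\}$, and $G_{T,M}$ is a single isolated vertex.

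For the inductive step, consider an internal node $M$ with children $M_1, M_2$. Since the subtrees rooted at $M_1, M_2$ are disjoint, $L(M_1) \cap L(M_2) = \emptyset$, so the inductive hypothesis gives that $G_{T,M_1} \cup G_{T,M_2}$ is a forest on $L(M) = L(M_1) \sqcup L(M_2)$. It then suffices to show that $G_{T,M}$ adds \emph{at most one} edge on top of this disjoint union, and that any such edge must cross between $L(M_1)$ and $L(M_2)$, so that the result of adding it is still a forest.

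The key claim is this: any edge in $G_{T,M}$ that is not already in $G_{T,M_1} \cup G_{T,M_2}$ joins a basic matroid $A \in L(M_1)$ to a basic matroid $B \in L(M_2)$, and is witnessed by an element $e \in E(A) \cap E(B)$. Such an element must be eliminated on the (unique) path from $A$ to $B$ in $T$; since the least common ancestor of $A$ and $B$ is $M$ itself, the element $e$ must lie in $Z_M$. Conversely, every $e \in Z_M$ does appear in exactly one descendant basic matroid from $L(M_1)$ and one from $L(M_2)$. Now I invoke the goodness of $T$: if $M$ is a $1$-sum then $Z_M = \emptyset$ and no new edge is added; if $M$ is a $2$-sum then $|Z_M| = 1$ and this single element sits in a unique basic matroid on each side; if $M$ is a $3$-sum then by goodness $Z_M$ is basic in each child, so all three elements of $Z_M$ lie in a single basic matroid $M_1(Z_M) \in L(M_1)$ and a single basic matroid $M_2(Z_M) \in L(M_2)$. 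In every case, at most one new edge is introduced, and it crosses from $L(M_1)$ to $L(M_2)$.

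Since the two forests $G_{T,M_1}$ and $G_{T,M_2}$ live on disjoint vertex sets, their union is a forest whose components are partitioned by the split $L(M_1) \sqcup L(M_2)$. Adding a single edge that crosses this partition cannot create a cycle, so $G_{T,M}$ remains a forest, completing the induction. The only step that really uses anything nontrivial is identifying the new cross-edges with elements of $Z_M$ and applying the definition of a good decomposition to conclude that $Z_M$ is concentrated in a single pair of basic matroids; once this is in hand, the forest property comes out essentially for free.
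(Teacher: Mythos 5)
Your proof is correct and follows essentially the same route as the paper's: induction over the decomposition tree, showing the induced subgraph on each node's descendant basic matroids is a forest, identifying cross-edges between the two children's subtrees with elements of $Z_M$, and using goodness to conclude that at most one such edge is added. No gaps worth noting.
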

\begin{proof}

To see that $G_T$ is a forest, we will prove by induction a slightly stronger property.  Let $M$ be a vertex of $T$, and let $G_M$ denote the subgraph of $G_T$ induced by the basic matroids that are descendants of $M$ in $T$.  We claim that for every vertex $M$ of $T$, the graph $G_M$ is a forest, and thus $G_T = G_{\tilde M}$ is a forest.  For the base case, this is obviously true if $M$ is leaf of $T$, since then it is a basic matroid and so $G_M$ is just the single vertex $M$.  Now let $M$ be an internal vertex with children $M_1$ and $M_2$.  By induction $G_{M_1}$ and $G_{M_2}$ are forests, so it will be enough to prove that there is a single edge between $G_{M_1}$ and $G_{M_2}$ in $G_M$.

It is easy to see that if $e$ is an element of $E(\tilde M)$ then $e$ is in exactly one basic matroid, and if $e$ is an element that is in a basic matroid but not in $E(\tilde M)$ then $e$ is an element of exactly two basic matroids, and is present in all matroids in $V_T$ that are on the two paths from these basic matroids to their least common ancestor in $T$ (other than the LCA itself).  This implies that an element $e$ is in $E(M_1) \cap E(M_2)$ if and only if it is present in a basic matroid descended from $M_1$ and in a basic matroid descended from $M_2$.  Thus if a basic matroid descended from $M_1$ and a basic matroid descended from $M_2$ share an element (i.e.~there is an edge between them in $G_M$), then this element is in $E(M_1) \cap E(M_2)$.  So if there are at least two edges between $G_{M_1}$ and $G_{M_2}$ in $G_M$ then $E(M_1) \cap E(M_2)$ contains multiple elements, which are not all contained in the same two basic matroids.  This is implies that $E(M_1) \cap E(M_2)$ is a bad sum-set, contradicting our assumption that the decomposition is good.
\end{proof}

Now that we know $G_T$ is a forest we can root each tree in it arbitrarily, and for every basic matroid $M$ that is not a root define $p(M)$ to be its parent.  For each basic matroid $M$ that is not a root, let $A_M$ be the set that corresponds to the edge between $M$ and $p(M)$.  So $A_M$ is either empty (if the edge corresponds to a $1$-sum), a single element in $M$ (if the edge corresponds to a $2$-sum), or a circuit of size $3$ in $M$ (if the edge corresponds to a $3$-sum).  Note that $A_M = E(M) \cap E(p(M))$.  The final modification to the decomposition that we will need is to make sure that there are no $2$-circuits in $M$ that involve an element of $A_M$.

\begin{theorem} \label{thm:structure_main}
Given a binary matroid $\tilde M$ and a $\{1,2,3\}$-decomposition $\tilde T$ with basic matroids $\tilde {\mathcal M}$, we can in polynomial time find a good $\{1,2,3\}$-decomposition $T$ with basic matroids $\mathcal M$ such that for every basic matroid $M \in \mathcal M$ there are no circuits of size $2$ that include an element of $A_M$ and an element of $E(\tilde M)$.  Furthermore, every basic matroid $M \in \mathcal M$ can be obtained by taking a $M' \in \tilde {\mathcal M}$, deleting elements, and adding parallel elements.
\end{theorem}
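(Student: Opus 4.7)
The plan is to start from the good decomposition produced by Lemma~\ref{lem:no_bad}; Lemma~\ref{lem:forest} guarantees its conflict graph is a forest, and I root each tree arbitrarily so that $p(M)$ and $A_M = E(M) \cap E(p(M))$ are defined for every non-root basic matroid. The remaining task is to eliminate every $2$-circuit $\{e,f\}$ of a basic matroid $M$ with $e \in A_M$ and $f \in E(\tilde M)$. I will do this by repeatedly ``moving'' such an $f$ one step up the rooted conflict forest using Lemma~\ref{lem:moving_parallel}, and arguing that a simple height potential strictly decreases with each move.

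For a single move, note that $f \in E(\tilde M)$ forces $f$ to appear in exactly one basic matroid (namely $M$), hence in $M$ and in all its ancestors in the decomposition tree $T$ up to the root $\tilde M$. Similarly, $e \in A_M$ together with goodness forces $e$ to lie in exactly two basic matroids, $M$ and $p(M)$. Applying Lemma~\ref{lem:not_circuit} inductively up from $M$, the set $\{e,f\}$ is a $2$-circuit in every ancestor of $M$ up to $L_1$, where $L_1$ is the child of $L := \mathrm{LCA}_T(M, p(M))$ that lies above $M$. At $L = L_1 \oplus_k L_2$ the element $e$ is in the intersection, so Lemma~\ref{lem:moving_parallel} gives $L = (L_1 \setminus \{f\}) \oplus_k L_2(f,e)$; I propagate this change by deleting $f$ from the matroids on the path in $T$ from $M$ up to $L_1$ and by adding $f$ parallel to $e$ in the matroids on the path from $L_2$ down to $p(M)$. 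The ground set of $L$, and hence of every ancestor of $L$ including $\tilde M$, is preserved, and no sum-set of $T$ changes, because $f$ is real and therefore never sits in any matroid intersection; in particular, the decomposition remains good.

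The main obstacle is termination. Adding $f$ parallel to $e$ in $p(M)$ may, by Lemma~\ref{lem:adding_element}, create a new $2$-circuit $\{f,g\}$ whenever $\{e,g\}$ was already a $2$-circuit of $p(M)$; if moreover $g \in A_{p(M)}$ this is a new bad $2$-circuit triggering another move. (The $2$-circuit $\{e,f\}$ itself in $p(M)$ is never bad, since in a good decomposition $e$ is shared only with $M$ so $e \notin A_{p(M)}$, and $f$ remains in $E(\tilde M)$ so $f \notin A_{p(M)}$.) To handle these cascades, define $\Phi = \sum_{f \in E(\tilde M)} h(c(f))$, where $c(f)$ is the unique basic matroid containing $f$ and $h(\cdot)$ is the height in the rooted conflict forest (roots at height $0$). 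Each move relocates exactly one element $f$, replacing $c(f)$ by its parent, so $\Phi$ strictly decreases by $1$; since $0 \le \Phi \le |E(\tilde M)| \cdot |\mathcal M|$, the procedure halts in polynomially many steps, and each step is clearly polynomial. Every modification is either a deletion from a basic matroid or an addition of a parallel element, so composing with the modifications of Lemma~\ref{lem:no_bad} yields the final clause that every basic matroid $M \in \mathcal M$ is obtained from some $M' \in \tilde{\mathcal M}$ by deletions and parallel additions.
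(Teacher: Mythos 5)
Your proposal is correct and follows essentially the same route as the paper: after Lemma~\ref{lem:no_bad} and Lemma~\ref{lem:forest}, repeatedly move each bad real element one step toward the root of its conflict-forest tree via Lemma~\ref{lem:moving_parallel} (with the change propagated along the relevant paths of $T$), observing that such moves never create new bad elements other than possibly the moved one and never change the sum-sets or the forest. Your bookkeeping via the potential $\Phi=\sum_{f\in E(\tilde M)}h(c(f))$ is just a repackaging of the paper's bound that each element moves at most the depth of its tree, so the argument matches the paper's in substance.
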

\begin{proof}
Lemmas~\ref{lem:no_bad} and \ref{lem:forest} imply that we can find a $\{1,2,3\}$-decomposition $T$ so that $G_T$ is a forest and every basic matroid in $\mathcal M$ can be obtained by taking a matroid $M' \in \tilde{\mathcal M}$ and removing elements and adding parallel elements.  Let $z \in E(\tilde M)$ be an element from the sum, and let $M \in \mathcal M$ be the basic matroid tat contains it.  If there is an element $a \in A_M$ such that $\{z,a\}$ is a circuit of $M$, then we say that $z$ is \emph{bad}.  We modify the decomposition by moving $z$ to $p(M)$: we redefine $M$ to be $M \setminus \{z\}$ and redefine $p(M)$ to be $p(M)(z,a)$ (as in Lemma~\ref{lem:adding_element}).  This changes the internal vertices of $T$ in the obvious way: any ancestor $Q$ of $M$ in $T$ that is not an ancestor of $p(M)$ becomes $Q \setminus \{z\}$ and any ancestor $Q$ of $p(M)$ that is not an ancestor of $M$ becomes $Q(z,a)$.  By repeated applications of Lemma~\ref{lem:moving_parallel} this still gives us a valid $\{1,2,3\}$-decomposition of $\tilde M$, and since only $z$ (an element of $E(\tilde M)$, and thus in only one basic matroid) is being moved, $G_T$ is still the same forest that it was originally.  If $z$ is still bad (i.e.~in a circuit of size $2$ with an element of $A_{p(M)}$), then this operation can be repeated to move $z$ to $p(p(M))$, then $p(p(p(M)))$, etc.  Eventually, $z$ will either become an element of a matroid $M$ with $A_M \neq \emptyset$ in which it not a part of any circuit of size $2$ with an element of $A_M$, or it will become part of a matroid $M$ that is a root in $G_T$ (and thus has $A_M = \emptyset$).  In either case, $z$ is no longer bad.  The number of times we have to do this for $z$ is at most the depth of the tree of $G_T$ containing the basic matroid that had $z$ as an element, and thus polynomial.  Note that this process does not make any element bad that was not bad originally, so we can apply it to every bad element and in polynomial time get a $\{1,2,3\}$-decomposition with no bad elements.  Moreover, the basic matroids are only changed by moving elements from one matroid (deletion) to another (adding parallel elements), so they can still be constructed from the original basic matroid in $\tilde{\mathcal M}$ by deletions and additions of parallel elements.
\end{proof}

\section{Algorithm}

It is now simple to define our algorithm.  Given a matroid $\tilde M$ (which without loss of generality has no loops), we first apply Theorem~\ref{thm:structure_main} to get a $\{1,2,3\}$-decomposition $T$ with basic matroids $\mathcal M$.  Then for every basic matroid $M \in \mathcal M$ in the decomposition we run in parallel a secretary algorithm on $(M / A_M) |(E(M) \cap E(\tilde M))$.  In other words, for every basic matroid $M$ we contract $A_M$, ignore all of the remaining elements that were added by the decomposition and so don't appear in $\tilde M$, and run our matroid secretary algorithm.  Since every element of $E(\tilde M)$ is in exactly one basic matroid $M$, we can just run these algorithms in parallel: when we see an element, we determine to which basic matroid it belongs and use that algorithm to make our decision.  If every secretary algorithm that we use on a basic matroid is $\alpha$-competitive, then we claim that overall the algorithm is $O(\alpha)$-competitive.

We will first prove that it gives a valid independent set.  We show something slightly stronger: if we choose an independent set from $(M /A_M) |(E(M) \cap E(\tilde M))$ for every $M \in \mathcal M$, then their union is independent in $\tilde M$.  In order to prove this, we first prove a few more lemmas about matroids, matroid sums, and how they relate to contraction and deletion.

\begin{lemma} \label{lem:cycle_contract}
Let $M$ be a binary matroid and $C \subseteq E(M)$ be a cycle of $M$, and let $X \subseteq E(M)$ be an arbitrary subset of elements with $C \not\subseteq X$.  Then $C \setminus X$ is a cycle of $M / X$.
\end{lemma}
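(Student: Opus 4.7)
Plan: The cleanest approach is to pass to a $GF(2)$-representation of the binary matroid $M$. Fix a representation $\{v_e\}_{e \in E(M)}$ in some $GF(2)$-vector space $V$. Two standard facts then make the lemma essentially a one-line calculation. First, $Y \subseteq E(M)$ is a cycle of $M$ iff $\sum_{e \in Y} v_e = 0$ in $V$: every circuit is a minimal $GF(2)$-dependence, so its columns sum to zero, and cycles are disjoint unions of circuits (the converse is an easy induction, peeling off a circuit at a time from any $Y$ whose column sum vanishes). Second, the contraction $M/X$ is represented over $GF(2)$ by the images $\{\bar v_e : e \in E(M) \setminus X\}$ in the quotient $V/W$, where $W = \mathrm{span}\{v_x : x \in X\}$; combining this with the first fact yields that $Y \subseteq E(M) \setminus X$ is a cycle of $M/X$ iff $\sum_{e \in Y} v_e \in W$.

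With these reformulations in hand, the proof is immediate. Since $C$ is a cycle of $M$, $\sum_{e \in C} v_e = 0$. Splitting the sum along the disjoint partition $C = (C \setminus X) \sqcup (C \cap X)$ and rearranging in characteristic $2$ gives
\[
\sum_{e \in C \setminus X} v_e \;=\; \sum_{e \in C \cap X} v_e,
\]
and the right-hand side manifestly lies in $W$. Thus $C \setminus X$ is a cycle of $M/X$. The hypothesis $C \not\subseteq X$ is used only to guarantee $C \setminus X$ is non-empty; with the usual convention that the empty set is a cycle, the argument works unconditionally.

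The main obstacle is primarily expositional rather than mathematical: the preceding text has been deliberately combinatorial, so passing to a $GF(2)$-representation and invoking the quotient description of $M/X$ warrants a brief justification. A fully combinatorial alternative exists: write $C = D_1 \sqcup \cdots \sqcup D_k$ into its constituent circuits and argue separately that each $D_i \setminus X$ is a cycle of $M/X$, reducing to the single-circuit case, which can be handled by induction on $|X|$ together with Lemma~\ref{lem:binary_basics}(4) (double elimination) applied to $D_i$ and a circuit witnessing $x \in \mathrm{cl}_M(D_i)$ when such an $x \in X$ exists. That route is longer and less illuminating, so I would present the representation-theoretic proof above.
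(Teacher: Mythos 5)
Your proof is correct, but it takes a genuinely different route from the paper's. The paper argues purely combinatorially: it first reduces to contracting one element at a time via $M/X = (M/\{x\})/(X\setminus\{x\})$ and to the case where $C$ is a single circuit, and then cites two facts from Oxley --- that a circuit avoiding $x$ is either a circuit of $M/\{x\}$ or a disjoint union of two circuits of $M/\{x\}$, and that the circuits of $M/\{x\}$ are the minimal nonempty sets of the form $A\setminus\{x\}$ with $A$ a circuit of $M$. You instead pass to a $GF(2)$-representation, identify the cycles of a binary matroid with the sets of columns summing to zero, and use that $M/X$ is represented by the images of the columns in the quotient by $W=\mathrm{span}\{v_x : x\in X\}$, after which the lemma is the one-line computation $\sum_{e\in C\setminus X} v_e = \sum_{e\in C\cap X} v_e \in W$. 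Both arguments are sound; yours is shorter and more conceptual once the two representation-theoretic facts are granted (each deserving the brief justification you sketch, since the paper otherwise works combinatorially and never fixes a representation), while the paper's version stays within its citation-backed combinatorial framework and needs no discussion of quotient representations. Your closing remark that the hypothesis $C\not\subseteq X$ only ensures $C\setminus X\neq\emptyset$ is consistent with the paper's conventions, since the paper does treat the empty set as a cycle elsewhere.
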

\begin{proof}
It is obvious that $M / X = (M / \{x\}) / (X\setminus\{x\})$ for any element $x \in X$ (see e.g.~\cite[Proposition 3.1.26]{Oxley}), so if we prove that $C \setminus \{x\}$ is a cycle of $M / \{x\}$ then by induction we have proved the lemma.  Similarly we may assume that $C$ is a circuit of $M$, since if it is a cycle but not a circuit then we can decompose $C$ into circuits, each of which (when $x$ is removed) becomes a cycle in $M / \{x\}$, and thus $C \setminus \{x\}$ is a cycle of $M / \{x\}$.

So let $C$ be a circuit of $M$.  If $x \not\in C$, then Corollary 9.3.7~of~\cite{Oxley} states that $C$ is either a circuit of $M/\{x\}$ or is the disjoint union of two circuits of $M/\{x\}$, and thus $C = C\setminus \{x\}$ is a cycle of $M/\{x\}$.  If $x \in C$, then $C \setminus \{x\}$ is a circuit (and thus a cycle) of $M/\{x\}$.  This is because the circuits of $M/\{x\}$ are exactly the non-empty minimal subsets of $\{A \setminus \{x\} : A\text{ a circuit of } M\}$ (see~\cite[Proposition 3.1.11]{Oxley}).
\end{proof}

The next lemma shows that contractions can be moved outside of matroid sums.

\begin{lemma} \label{lem:sum_contract1}
Let $M_1, M_2$ be binary matroids with $M = M_1 \oplus M_2$ and $E(M_1) \cap E(M_2) = Z$.  Let $X \subseteq E(M_2) \setminus Z$.  Then any independent set of $M_1 \oplus (M_2 / X)$ is also independent in $(M_1 \oplus M_2) / X$.
\end{lemma}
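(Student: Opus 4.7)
My plan is to prove the contrapositive: if $I$ is dependent in $(M_1 \oplus M_2)/X$, then $I$ is already dependent in $M_1 \oplus (M_2/X)$. First I would verify that these two matroids share the same ground set. Because $X \subseteq E(M_2) \setminus Z$ is disjoint from $E(M_1)$, we have
\[
E((M_1 \oplus M_2)/X) = (E(M_1) \symd E(M_2)) \setminus X = E(M_1) \symd (E(M_2) \setminus X) = E(M_1 \oplus (M_2/X)),
\]
so it is meaningful to compare independence of the same set $I$ in both matroids.

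Suppose now that $I$ is dependent in $(M_1 \oplus M_2)/X$; then $I$ contains a circuit $D$ of $(M_1 \oplus M_2)/X$. By the characterization of circuits under contraction (Oxley, Proposition 3.1.11, already cited in the proof of Lemma~\ref{lem:cycle_contract}), $D = C \setminus X$ for some circuit $C$ of $M_1 \oplus M_2$. By the definition of the sum, $C = C_1 \symd C_2$ with $C_i$ a cycle of $M_i$. Since $X \cap E(M_1) = \emptyset$, removing $X$ does not touch $C_1$, so $D = C_1 \symd (C_2 \setminus X)$. I then invoke Lemma~\ref{lem:cycle_contract} on $C_2$ in $M_2$: either $C_2 \subseteq X$, in which case $C_2 \setminus X = \emptyset$ is trivially a cycle of $M_2/X$, or else Lemma~\ref{lem:cycle_contract} applies and yields that $C_2 \setminus X$ is a cycle of $M_2/X$. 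In either case, $D$ is the symmetric difference of a cycle of $M_1$ and a cycle of $M_2/X$, and hence by definition of matroid sum it is a cycle of $M_1 \oplus (M_2/X)$. Since $D \subseteq I$ and $D$ is nonempty (so it contains a circuit), $I$ is dependent in $M_1 \oplus (M_2/X)$, completing the contrapositive.

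The argument is essentially bookkeeping around two external facts: the characterization of circuits in a contracted matroid and Lemma~\ref{lem:cycle_contract}. The only potentially subtle point is ensuring that $X$ is disjoint from $E(M_1)$ (and from $Z$), which is exactly what the hypothesis $X \subseteq E(M_2) \setminus Z$ guarantees; this is what allows the contraction by $X$ to commute cleanly with the symmetric-difference decomposition on the $M_2$ side without disturbing the $M_1$ side. It is worth noting that the lemma is one-directional: the reverse inclusion of independent sets need not hold in general, since a basis of $M_2|X$ can sit consistently with elements of $E(M_1)$ inside $M_1 \oplus M_2$ in ways that the ``pre-contracted'' matroid $M_2/X$ is blind to.
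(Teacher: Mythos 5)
Your proof is correct and follows essentially the same route as the paper: argue the contrapositive, lift a circuit of $(M_1 \oplus M_2)/X$ to a circuit of $M_1 \oplus M_2$, decompose it as a symmetric difference of cycles, and apply Lemma~\ref{lem:cycle_contract} on the $M_2$ side (using $X \cap Z = \emptyset$ so the intersection with $Z$ is undisturbed). The only difference is cosmetic: you lift via the circuit characterization of contraction (Oxley, Prop.~3.1.11), whereas the paper adds a basis $B_X$ of $X$ and extracts a circuit $Y' \cup B'$; both yield the same recombination step, and your handling of the degenerate case $C_2 \subseteq X$ is a nice touch of care.
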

\begin{proof}
We prove the contrapositive.  Let $Y$ be a circuit of $(M_1 \oplus M_2) / X$.  Then for any basis $B_X$ of $X$ we know that $Y \cup B_X$ is dependent in $M = M_1 \oplus M_2$.  Thus there are subsets $Y' \subseteq Y$ (which must be nonempty) and $B' \subseteq B_X$ so that $Y' \cup B'$ is a circuit of $M_1 \oplus M_2$.  So there are cycles $Y_1$ of $M_1$ and $Y_2$ of $M_2$ so that $Y' \cup B' = Y_1 \symd Y_2$. Lemma~\ref{lem:cycle_contract} implies that $Y_2 \setminus X = Y_2 \setminus B'$ is a cycle of $M_2 / X$, and since $Z \cap X = \emptyset$ this means that $Y_2 \setminus B' \cap Z = Y_2 \cap Z = Y_1 \cap Z$, and thus $Y_1 \symd (Y_2 \setminus B')$ is a cycle of $M_1 \oplus (M_2 / X)$.  Since $B' \subseteq X \subseteq E(M_2) \setminus Z$, we know that $B' \cap Y_1 = \emptyset$ and thus $Y_1 \symd (Y_2 \setminus B') = (Y_1 \symd Y_2) \setminus B' = Y' \subseteq Y$.  So $Y$ is dependent in $M_1 \oplus (M_2 / X)$.
%
\end{proof}
%

The next lemma, despite being a rather simple observation, forms the core of our argument.  It says that if $M = M_1 \oplus M_2$, then as long as we contract $E(M_1) \cap E(M_2)$ in \emph{one} of $M_1, M_2$ we can treat them separately without worrying about independence.  This is what will let us run a different algorithm in each of the basic matroids without worrying about strange interactions.

\begin{lemma} \label{lem:sum_contract2}
Let $M_1, M_2$ be binary matroids with $M = M_1 \oplus M_2$ and $E(M_1) \cap E(M_2) = Z$.  Let $X_1$ be independent in $M_1 \setminus Z$, and let $X_2$ be independent in $M_2 / Z$.  Then $X = X_1 \cup X_2$ is independent in $M$.
\end{lemma}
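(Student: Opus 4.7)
My plan is to argue by contradiction. Assume $X_1 \cup X_2$ is dependent in $M$; then it contains some circuit $C$ of $M$, and by the definition of the binary sum we can write $C = Y_1 \symd Y_2$ where $Y_i$ is a cycle of $M_i$. Since $C \subseteq E(M) = E(M_1) \symd E(M_2)$, the circuit $C$ is disjoint from $Z$, which forces $Y_1 \cap Z = Y_2 \cap Z$; call this common set $W$. From $Y_i \setminus Z \subseteq C \subseteq X_1 \cup X_2$ together with the disjointness of $E(M_1) \setminus Z$ and $E(M_2) \setminus Z$, it follows that $Y_i \setminus Z \subseteq X_i$ for $i = 1, 2$.

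The heart of the proof is a case analysis on whether $Y_2 \subseteq Z$. If $Y_2 \not\subseteq Z$, I will apply Lemma~\ref{lem:cycle_contract} to conclude that $Y_2 \setminus Z$ is a non-empty cycle of $M_2 / Z$ and therefore dependent. Because $Y_2 \setminus Z \subseteq X_2$, this immediately contradicts the hypothesis that $X_2$ is independent in $M_2 / Z$. So I may assume $Y_2 \subseteq Z$, and here I will use the structure of $Z$ in each type of sum: in a $1$-sum $Z = \emptyset$; in a $2$-sum $Z = \{z\}$ with $z$ not a loop, so $\{z\}$ is not a cycle; and in a $3$-sum $Z$ is a $3$-element circuit of $M_2$, so the only subsets of $Z$ that are cycles of $M_2$ are $\emptyset$ and $Z$ itself. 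Hence $Y_2$ is either $\emptyset$ or (only in the $3$-sum case) all of $Z$.

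If $Y_2 = \emptyset$ then $W = \emptyset$ and $C = Y_1$ is a non-empty cycle of $M_1$ contained in $E(M_1) \setminus Z$, hence a cycle of the restriction $M_1 \setminus Z$; since it also lies inside $X_1$, this contradicts the independence of $X_1$ in $M_1 \setminus Z$. The main obstacle is the remaining $3$-sum subcase $Y_2 = Z$, since one cannot dispose of it by simply passing to a contraction of $M_1$. Here I will use the definitional fact that in a $3$-sum $Z$ is also a circuit of $M_1$, invoke Lemma~\ref{lem:partition} to write $Y_1$ as a disjoint union of circuits of $M_1$ in which $Z$ appears, and conclude that $Y_1 \setminus Z$ is a cycle of $M_1$ sitting inside $E(M_1) \setminus Z$, i.e.~a cycle of $M_1 \setminus Z$. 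It is non-empty (otherwise $C = Y_1 \symd Z = \emptyset$) and contained in $X_1$, delivering the final contradiction with the independence of $X_1$ in $M_1 \setminus Z$.
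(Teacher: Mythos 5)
Your proof is correct and follows essentially the same route as the paper: assume a circuit $C = Y_1 \symd Y_2$ inside $X_1 \cup X_2$, analyze how the cycle $Y_2$ meets $Z$, and derive a contradiction either with the independence of $X_2$ in $M_2 / Z$ or, in the $Y_2 = Z$ case, with the independence of $X_1$ in $M_1 \setminus Z$ (your use of Lemma~\ref{lem:partition} there is the paper's symmetric-difference argument in slightly different clothing). The only cosmetic difference is that you dispatch the intermediate cases ($Y_2 \not\subseteq Z$ but $Y_2 \cap Z \neq \emptyset$) in one stroke via Lemma~\ref{lem:cycle_contract}, where the paper instead unwinds the definition of contraction and separately rules out $Y_2 \supsetneq Z$; both arguments are sound.
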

\begin{proof}
Suppose that $X$ is dependent in $M_1 \oplus M_2$.  Then there is a circuit $Y \subseteq X$ that can be expressed as $Y_1 \symd Y_2$ where $Y_1$ is a cycle in $M_1$ and $Y_2$ is a cycle in $M_2$.  We break into cases depending on $|Y_2 \cap Z|$.  If $|Y_2 \cap Z| = 0$ then $Y_2 \subseteq X_2$, which is a contradiction since $X_2$ is independent in $M_2 / Z$ (and thus in $M_2$) while $Y_2$ is a cycle of $M_2$.  If $|Y_2 \cap Z| = 1$ or $|Y_2 \cap Z| = 2$, then let $Y_2 \cap Z = P$.  Then $Y_2 \setminus P \subseteq X_2$, so is independent in $M_2 / Z$.  Note that $P$ is independent in $M_2 |Z$, since if we are in the $2$-sum case $Z =P$ is just a single non-loop element, while if we are in the $3$-sum case $Z$ is a circuit of size $3$ and $P \subset Z$.  Thus by the definition of contraction $(Y_2 \setminus P) \cup P = Y_2$ is independent in $M_2$, giving a contradiction.

Finally, suppose that $|Y_2 \cap Z| = 3$.  Since $Y_2$ is a cycle in $M_2$, it can be partitioned into disjoint circuits.  If $Z$ is not a circuit in the partition then there is a circuit $C$ in the partition with at most $2$ elements from $Z$.  This is a contradiction, since $C \setminus Z \subseteq Y_2 \setminus Z \subseteq X_2$ is independent in $M_2 / Z$, so adding any $2$ elements of $Z$ to $X_2$ gives an independent set in $M_2$ (since $Z$ is a circuit of size $3$).  Thus $Z$ is a circuit in the partition, but since $Y_2 \setminus Z \subseteq X_2$ is independent in $M_2 / Z$ it is also independent in $M_2$, so there cannot be any other circuits in the partition.  So $Y_2 = Z$.  But this means that $Y = Y_1 \symd Y_2 = Y_1 \setminus Z$.  Since $Y_2 \cap Z = Y_1 \cap Z$, this in fact implies that $Y_1 = Y \cup Z$.

We claim that this is a contradiction.  Note that $Y \subseteq X_1$ is independent in $M_1 \setminus Z$ and thus in $M_1$.  Since $Y_1$ is a cycle of $M_1$, it can be partitioned into disjoint circuits of $M_1$.  Then by Lemma~\ref{lem:binary_basics} the symmetric difference of $Y_1$ and $Z$ is either empty or contains a circuit.  But $Y_1 \symd Z = Y$ is nonempty and independent in $M_1$, giving a contradiction.
\end{proof}

We now prove that by contracting $A_M$ in each basic matroid $M$ and ignoring all other elements of the basic matroid that are are not in $E(\tilde M)$ (i.e.~fake elements that were introduced by the decomposition), we can choose any independent we want in each basic matroid and still be guaranteed global independence.  In other words, the sets $A_M$ give us the global coordination that we require.

\begin{lemma} \label{lem:independence}
For each $M \in \mathcal M$, let $I_M$ be an independent set of $(M / A_M) |(E(M) \cap E(\tilde M))$.  Then $I = \cup_{M \in \mathcal M} I_M$ is independent in $\tilde M$.
\end{lemma}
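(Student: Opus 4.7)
The plan is to induct on the decomposition tree $T$, using a hypothesis stronger than the lemma statement so that the sums can be unwound one internal vertex at a time. For each vertex $M$ of $T$ and each fake element $z \in E(M)$, let $M^{in}(z) \in \mathcal M$ denote the unique basic matroid descended from $M$ in $T$ that contains $z$. The rooting of $G_T$ (which is a forest by Lemma~\ref{lem:forest}) determines whether $z \in A_{M^{in}(z)}$, i.e., whether $z$ is to be contracted by $M^{in}(z)$ or deleted; accordingly, partition the fake elements of $E(M)$ into $B_M$ (contracted) and $D_M$ (deleted), so that $E(M) \cap E(\tilde M) = E(M) \setminus (B_M \cup D_M)$.

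I will prove by induction on $T$ the strengthened claim that $I_M^{\star} := \bigcup_{M' \in \mathcal M,\ M' \text{ descends from } M} I_{M'}$ is independent in $(M - D_M) / B_M$ for every vertex $M$ of $T$. The lemma is the case $M = \tilde M$, where $B_{\tilde M} = D_{\tilde M} = \emptyset$. The base case (basic $M$) is immediate from the definition of $I_M$: for basic $M$ we have $B_M = A_M$ and $D_M = E(M) \setminus E(\tilde M) \setminus A_M$, and since these sets are disjoint, $(M - D_M)/B_M = (M/A_M) | (E(M) \cap E(\tilde M))$.

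For the inductive step, let $M$ have children $M_1, M_2$ in $T$ with sum intersection $Z$. Because the decomposition is good (Theorem~\ref{thm:structure_main}), $Z$ is entirely contained in two basic matroids $M_1(Z), M_2(Z)$ that form a single edge of $G_T$; after rooting, one of them is the parent of the other, so without loss of generality $Z = A_{M_2(Z)}$, placing $Z \subseteq B_{M_2}$ and $Z \subseteq D_{M_1}$. Writing $B_i := B_M \cap E(M_i)$, $D_i := D_M \cap E(M_i)$, and $N_i := (M_i - D_i) / B_i$, the induction hypothesis specializes (after commuting the disjoint contractions and deletions) to $I_{M_1}^{\star}$ being independent in $N_1 \setminus Z$ and $I_{M_2}^{\star}$ being independent in $N_2 / Z$. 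Since $E(N_1) \cap E(N_2) = Z$, Lemma~\ref{lem:sum_contract2} applies to $N_1 \oplus N_2$ and yields that $I_M^{\star} = I_{M_1}^{\star} \cup I_{M_2}^{\star}$ is independent in $N_1 \oplus N_2$. Using that deletion commutes with the sum, $(M - D_M) = (M_1 - D_1) \oplus (M_2 - D_2)$, and two applications of Lemma~\ref{lem:sum_contract1} --- one per side, to move $B_1$ and $B_2$ out of the sum --- transport the independence from $N_1 \oplus N_2 = ((M_1 - D_1)/B_1) \oplus ((M_2 - D_2)/B_2)$ to $((M_1 - D_1) \oplus (M_2 - D_2))/(B_1 \cup B_2) = (M - D_M)/B_M$, finishing the step.

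The main obstacle is that last transport. Lemma~\ref{lem:sum_contract1} only asserts a one-sided inclusion of independent sets, so formally composing one application per side is not immediate from the statement. The cleanest fix is to upgrade Lemma~\ref{lem:sum_contract1} to an equality of binary matroids: the reverse containment follows from essentially the same cycle-decomposition argument run in the other direction, using the basis-independent characterization of contraction from Section~\ref{sec:basic_matroid} and Lemma~\ref{lem:not_circuit}. Once the equality is in hand, the two contractions commute symbolically out of the sum and the induction closes without loss.
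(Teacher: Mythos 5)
Your proof is correct and follows essentially the same route as the paper: induction on the decomposition tree with a strengthened hypothesis, using goodness of the decomposition to identify the sum-set at each internal vertex with $A_M$ for one of the two basic matroids containing it, then applying Lemma~\ref{lem:sum_contract2} at that vertex and Lemma~\ref{lem:sum_contract1} to pull the contractions outside the sum. The only differences are minor bookkeeping (the paper contracts the sets $B_H$ but never deletes the remaining fake elements, simply noting the chosen sets avoid them) and your explicit upgrade of Lemma~\ref{lem:sum_contract1} to an equality of matroids, which is indeed valid for this cycle-based sum (cycles of a contraction are exactly the restrictions of cycles) and cleanly justifies the double application that the paper performs without further comment.
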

\begin{proof}
We prove this by induction on $T$.  Recall that every vertex in $T$ corresponds to a matroid consisting of the sum of its children and to at most one edge of $G_T$ (with it not corresponding to any edge in $G_T$ if and only if it represents a matroid that is the $1$-sum of its children).  For an internal vertex $H$ of $T$, recall that $Z_H$ is the intersection of the ground sets of its children.  Also, for every vertex $H$ of $T$ let  $D(H)$ be the set of basic matroids that are its descendants, and let $B_H = \cup_{M \in D(H) : A_M \subseteq E(H)} A_M$.  Less formally, $B_H$ is the set of elements that some basic matroid that is a descendent of $H$ contracts when running the algorithm, except for those elements that were already summed along by a descendent of $H$ and thus are not elements of $H$.
We claim that  $\cup_{M \in D(H)} I_M$ is independent in $(H / B_H)$ for every vertex $H \in T$.  This gives the lemma when applied to the root $H = \tilde M$ since $B_{\tilde M} = \emptyset$.

We prove this by induction.  For the base case, let $H$ be a basic matroid.  Then $\cup_{M \in D(H)} I_M = I_H$, which by definition is independent in $(H / A_H) = (H / B_H)$.  For the inductive case, let $H$ be the parent of $H_1$ and $H_2$.  Note that since there are no bad sets in the decomposition $Z_H$ is contained in exactly one $M_1 \in D(H_1)$ and in exactly one $M_2 \in D(H_2)$, and is equal to either $A_{M_1}$ or $A_{M_2}$ (depending on whether $M_1 = p(M_2)$ or $M_2 = p(M_1)$ in $G_T$).  Without loss of generality we will assume that $Z_H = A_{M_1}$, and thus $Z_H \subseteq B_{H_1}$ and $Z_H \cap B_{H_2} = \emptyset$.  Thus $B_H = (B_{H_1} \setminus Z_H) \cup B_{H_2}$.  By induction $\cup_{M \in D(H_1)} I_M$ is independent in $H_1 / B_{H_1} = (H_1 / (B_{H_1} \setminus Z_H)) / Z_H$.  Similarly, by induction and the fact that every $I_M$ contains only elements from $E(\tilde M)$ we have that $\cup_{M \in D(H_2)} I_M$ is independent in $H_2 / B_{H_2} =(H_2 / B_{H_2}) \setminus Z_H$.  Thus by Lemma~\ref{lem:sum_contract2} we have that $\cup_{M \in D(H)} I_M$ is independent in $(H_1 / (B_{H_1} \setminus Z_H)) \oplus (H_2 / B_{H_2})$.  Now we can apply Lemma~\ref{lem:sum_contract1} to move the contracted sets outside of the sum, so after two applications of the lemma we get that $\cup_{M \in D(H)} I_M$ is independent in $(H_1 \oplus H_2) / ((B_{H_1} \setminus Z_H) \cup B_{H_2}) = (H_1 \oplus H_2) / B_H$ as claimed.
\end{proof}

Now it just remains the show that our algorithm returns a set with large value.  We first will prove a lemma on the cost of contractions. Recall that for any matroid $M$ with weights on the elements, we let $OPT(M)$ denote the independent set of $M$ with the most total weight.  Since our weight function $w$ is only on elements of $E(\tilde M)$, in a basic matroid $M \in \mathcal M$ elements that are not in $E(\tilde M)$ do not have a weight.  We slightly abuse notation by extending $w$ to give these elements weight $0$.

Intuitively, since we only contract at most three elements of each basic matroid (and in the case of contracting three elements they must be a circuit) the maximum weight independent set after contraction should have close to the same weight as the maximum independent set before contraction.  At a first glance, this only fails when there is an element that is parallel to a contracted element, which is exactly why we modified the decomposition to ensure that this cannot be the case.  But it is worth noting that, like most other parts of our algorithm, this only works for binary matroids.  In the uniform matroid $U_{2,4}$ (the matroid on four elements in which any two form a basis), if we contract a circuit of size $3$ the remaining element is not independent, while before contraction it is.  Thus the adversary could place all of its value on this one element, and it would be independent but we would not be able to include it.  On the other hand, it is easy to see that this type of situation cannot happen in a graphical matroid.  This example is instructive, since $U_{2,4}$ is the forbidden minor for a matroid being binary, and the proof of the following lemma can be seen as an extension of the natural proof of this property for graphs.

\begin{lemma} \label{lem:value_contract}
Let $M \in \mathcal M$ be a basic matroid.  Then $w(OPT(M / A_M)) \geq (1/3) \cdot w(OPT(M))$.
\end{lemma}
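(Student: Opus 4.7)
The plan is as follows. Let $I = OPT(M)$ and $I' := I \setminus A_M$. Since the elements of $A_M$ are auxiliary elements added by the decomposition and hence lie outside $E(\tilde M)$, they carry weight $0$ under $w$, so $w(I') = w(I)$. The case $A_M = \emptyset$ is trivial, and the case $|A_M|=1$ will follow from a simpler (and strictly stronger) version of the argument below that yields ratio $1/2$. I will focus on the main case $|A_M| = 3$, where $A_M$ is a $3$-circuit in $M$, and establish $w(OPT(M/A_M)) \geq w(I)/3$.

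The first step, which I expect to be the main obstacle, is to show the \emph{no-loop claim}: no element $\ell \in I' \subseteq E(\tilde M)$ is a loop in $N := M/A_M$. To prove this I will suppose for contradiction that such an $\ell$ exists; then $\{\ell\} \cup A_M$ is dependent in $M$ and contains a circuit $C$ with $\ell \in C$ and $|C| \leq 4$. I will rule out each possible size: $|C|=1$ would mean $\ell$ is a loop, contradicting $\ell \in I$; $|C|=2$ gives a $2$-circuit $\{\ell, a\}$ with $\ell \in E(\tilde M)$ and $a \in A_M$, which is forbidden by Theorem~\ref{thm:structure_main}; $|C|=3$ yields via Lemma~\ref{lem:binary_basics} that $C \symd A_M = \{\ell, a_k\}$ for some $a_k \in A_M$ is a disjoint union of circuits, hence again either a forbidden $2$-circuit or forcing a loop; and $|C|=4$ gives $C \symd A_M = \{\ell\}$, again forcing a loop. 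This case analysis uses binarity of $M$ in an essential way and is where the guarantee from Theorem~\ref{thm:structure_main} is used.

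Next I would apply Edmonds' matroid partition theorem to $I'$ in $N$. For any nonempty $T \subseteq I'$, since $T$ is independent in $M$ of rank $|T|$ and $r_M(A_M) = 2$,
\[
r_N(T) \;=\; r_M(T \cup A_M) - r_M(A_M) \;\geq\; |T| - 2.
\]
Combined with the no-loop claim (which forces $r_N(T) \geq 1$ on any nonempty $T \subseteq I'$), I obtain $r_N(T) \geq \max(|T|-2,\,1) \geq |T|/3$ for every nonempty $T \subseteq I'$. Edmonds' theorem then partitions $I'$ into at most three independent sets of $N$, and by averaging one of them has weight at least $w(I')/3 = w(I)/3$; since that set is independent in $M/A_M$, it witnesses $w(OPT(M/A_M)) \geq w(OPT(M))/3$, as desired. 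The factor $1/3$ is tight, as exhibited by the cycle matroid of $K_4$ with $A_M$ taken to be a triangle and unit weight on each of the three star edges.
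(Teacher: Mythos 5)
Your argument is correct, and it reaches the $1/3$ bound by a genuinely different mechanism than the paper. The shared core is your ``no-loop claim'': like the paper, you must show that elements of $E(\tilde M)$ remain usable after contracting $A_M$, and both arguments do this with binarity (symmetric differences / double elimination of circuits with $A_M$) together with the guarantee from Theorem~\ref{thm:structure_main} that no element of $E(\tilde M)$ is parallel to an element of $A_M$; your case analysis on $|C|$ is sound (note only that the existence of a circuit through $\ell$ inside $\{\ell\}\cup A_M$ comes from the definition of a loop of $M/A_M$ via a basis $B$ of $M|A_M$ --- the mere dependence of $\{\ell\}\cup A_M$ would not suffice, since $A_M$ is itself a circuit --- but the claim is true and your use of it is fine). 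Where you diverge is the second half: the paper explicitly builds one independent set of $M/A_M$, putting the heaviest element of $OPT(M)$ together with a basis $B$ of $A_M$ and augmenting with all but two further elements of $OPT(M)$, then bounding the two discarded elements by the heaviest one; you instead combine the contraction rank formula $r_{M/A_M}(T)\ge |T|-2$ with the no-loop claim to get $r_{M/A_M}(T)\ge |T|/3$ on subsets of $OPT(M)\setminus A_M$, invoke Edmonds' covering/partition theorem to split $OPT(M)\setminus A_M$ into three sets independent in $M/A_M$, and take the heaviest by averaging. Your route is less elementary (it imports matroid union/covering) but more systematic: it cleanly separates the two sources of loss (rank drop at most $2$, no loops created) and generalizes immediately, e.g.\ contracting a set of rank $r$ with no created loops costs at most a factor $r+1$; your $K_4$ example also shows the constant $1/3$ is tight, which the paper does not remark on. The paper's construction has the advantage of using only the augmentation axiom and yielding an explicit witness set. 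Your sketch of the $|A_M|=1$ case (ratio $1/2$ by the same scheme) matches the paper's bound there as well.
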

\begin{proof}
If $A_M = \emptyset$ then the lemma is trivial, and if $OPT(M / A_M) = OPT(M)$ then it is also trivial.  So we will assume that $|A_M|$ is either $1$ (for a $2$-sum) or $3$ (for a $3$-sum) and that $OPT(M)$ is dependent in $M / A_M$ (note that $OPT(M)$ cannot contain any element of $A_M$ as $OPT(M) \subseteq E(\tilde M)$).  We know from Theorem~\ref{thm:structure_main} that there are no circuits of size $2$ in $M$ that use an element of $A_M$.  If $|A_M| = 1$, then let $\{a\} = A_M$.  Since $OPT(M)$ is dependent in $M /A_M$ there is a unique circuit of $M$ in $OPT(M) \cup \{a\}$ (see e.g.~\cite[Proposition 1.1.6]{Oxley} for a proof of uniqueness), and as noted this circuit has size at least three.  So if we remove whatever element of this circuit has the smallest weight (other than $a$, which has weight $0$), we are left with a set that is independent in $M / A_M$ and has has at least half of the weight of $OPT(M)$.

If $|A_M| = 3$, let $B \subseteq A_M$ be an arbitrary set of size $2$ (so a base of $A_M$).  Let $\{x\} = A_M \setminus B$.  For $i \in [|OPT(M)|]$, let $A_i \subseteq OPT(M)$ denote the $i$ elements of $OPT(M)$ with the most weight.  We first prove that $A_1 = \{a\}$ is independent in $M / A_M$, i.e.~that $A_1 \cup B$ is independent in $M$.  Since $A_1 \subseteq OPT(M) \subseteq E(\tilde M)$, we know that $a$ does not form a circuit of size $2$ with any element of $B$, so for $A_1 \cup B$ to be dependent in $M$ it must be a circuit.  Applying double-elimination to $B \cup A_1$ and $A_M$ implies that there is a circuit contained in $\{a,x\}$.  This is a contradiction, since $M$ has no loops and no elements of $E(\tilde M)$ (in particular, $a$) that form circuits of size $2$ with any element of $A_M$ (in particular, $x$).  Thus $A_1$ is independent in $M / A_M$.

If $|OPT(M)| \leq 3$ then clearly $w(OPT(M / A_M)) \geq w(A_1) \geq (1/3) \cdot w(OPT(M))$ so we are done.  If $|OPT(M)| \geq 4$, then there is a set of $|OPT(M)| - 3$ elements of $OPT(M) \setminus A_1$ that we can add to $A_1 \cup B$ while still maintaining independence in $M$ (by the independence augmentation axiom).  This gives us a set $D$ that has $a$ (the most valuable element of $OPT(M)$) and $B$ as well as all but two elements of $OPT(M) \setminus A_1$ that is independent in $M$.  So $D \setminus B$ is independent in $M/A_M$ and $w(OPT(M/A_M)) \geq w(D \setminus B) \geq (1/3) \cdot w(OPT(M))$.
\end{proof}

Now we can finally prove our main theorem, that our algorithm is $O(\alpha)$-competitive when the algorithm that we run on each basic matroid is $\alpha$-competitive.  For a matroid $M$, let $D(M)$ denote the set of matroids that can be obtained from $M$ by removing elements, adding parallel elements, and contracting either nothing,  a single element, or a circuit of size $3$.

\begin{theorem} \label{thm:main}
Let $\tilde M$ be a binary matroid for which we can find a $\{1,2,3\}$-decomposition with basic matroids $\tilde {\mathcal M}$, and suppose that for every matroid $M \in \tilde{\mathcal M}$ and $M' \in D(M)$ there is an algorithm $\mathcal Alg_{M'}$ that is $\alpha$-competitive.  Then $\tilde M$ admits an $O(\alpha)$-competitive algorithm for the matroid secretary problem.
\end{theorem}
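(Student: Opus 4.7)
The plan is to analyze the algorithm sketched in the paragraphs preceding the theorem. First I apply Theorem~\ref{thm:structure_main} to the given $\{1,2,3\}$-decomposition to obtain a good one $T$ with basic matroids $\mathcal M$, in which every $M \in \mathcal M$ is obtained from some $M' \in \tilde{\mathcal M}$ by deletions and parallel-additions and in which no element of $E(\tilde M)$ lies in a $2$-circuit with an element of $A_M$. For each $M \in \mathcal M$, define the local matroid $M^{\dagger} := (M/A_M)|(E(M) \cap E(\tilde M))$. Because $A_M$ is either empty, a single element, or a $3$-circuit of $M$, and the restriction only deletes sum elements (those in $E(M) \setminus E(\tilde M)$), $M^{\dagger}$ lies in $D(M')$, so by hypothesis an $\alpha$-competitive algorithm $\mathcal Alg_{M^{\dagger}}$ is available. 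The overall algorithm runs all $\mathcal Alg_{M^{\dagger}}$ simultaneously: by Lemma~\ref{lem:forest} each streamed element of $E(\tilde M)$ belongs to exactly one basic matroid $M$ and is presented to $\mathcal Alg_{M^{\dagger}}$.

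Feasibility is immediate: if each $I_M$ returned by $\mathcal Alg_{M^{\dagger}}$ is independent in $M^{\dagger}$, then Lemma~\ref{lem:independence} gives that $\bigcup_M I_M$ is independent in $\tilde M$.

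The substance of the proof is the value bound, for which I would partition $OPT(\tilde M)$ along basic matroids by setting $OPT_M := OPT(\tilde M) \cap E(M)$. The key claim is that $OPT_M$ is independent in $M$ itself. I would prove this by induction down $T$: at an internal node $H = H_1 \oplus H_2$ with sum-set $A$, the subset $OPT(\tilde M) \cap E(H_i)$ lies in $E(H_i) \setminus A$ because $A \cap E(\tilde M) = \emptyset$, and it is independent in $H$ as a subset of the inductive hypothesis, so Lemma~\ref{lem:not_circuit} transports independence from $H$ down to $H_i$. Iterating to the leaves yields $w(OPT_M) \leq w(OPT(M))$. Now Lemma~\ref{lem:value_contract} gives $w(OPT(M^{\dagger})) = w(OPT(M/A_M)) \geq \tfrac{1}{3} w(OPT(M)) \geq \tfrac{1}{3} w(OPT_M)$, where the first equality uses that elements outside $E(\tilde M)$ carry weight zero so restricting them away is free. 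Combining with $\alpha$-competitiveness and linearity of expectation,
\[
\mathbb{E}[w(I)] = \sum_{M \in \mathcal M} \mathbb{E}[w(I_M)] \geq \frac{1}{\alpha} \sum_{M \in \mathcal M} w(OPT(M^{\dagger})) \geq \frac{1}{3\alpha} \sum_{M \in \mathcal M} w(OPT_M) = \frac{1}{3\alpha}\, w(OPT(\tilde M)),
\]
which is a $3\alpha$-competitive algorithm, i.e.~$O(\alpha)$.

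The main obstacle is the inductive independence-splitting of $OPT(\tilde M)$: a priori a circuit of $\tilde M$ may freely mix elements from different basic matroids, so there is no reason to expect $OPT_M$ to be independent in $M$. What rescues the argument is that $OPT_M$ avoids every sum-set (since those elements are not in $E(\tilde M)$), allowing Lemma~\ref{lem:not_circuit} to apply cleanly at every internal node of $T$. The factor-$3$ loss from Lemma~\ref{lem:value_contract} is the only other structural price, arising because after preprocessing each basic matroid may still have to contract a $3$-circuit; with Theorem~\ref{thm:structure_main} ensuring no small circuits through $A_M$-elements, this is the tight loss from black-box use of the local algorithms, and the remaining steps follow directly from Lemmas~\ref{lem:independence}, \ref{lem:not_circuit}, and \ref{lem:value_contract}.
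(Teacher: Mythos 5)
Your proposal is correct and follows essentially the same route as the paper: preprocess via Theorem~\ref{thm:structure_main}, run the black-box algorithms in parallel on each $(M/A_M)|(E(M)\cap E(\tilde M))$, get feasibility from Lemma~\ref{lem:independence}, and charge $OPT(\tilde M)$ against the local optima using Lemma~\ref{lem:value_contract} to conclude $3\alpha$-competitiveness. Your only deviation is a virtue: you explicitly justify, by induction down $T$ with Lemma~\ref{lem:not_circuit}, that $OPT(\tilde M)\cap E(M)$ is independent in each basic matroid $M$, a step the paper's chain of inequalities uses but leaves implicit.
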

\begin{proof}
Recall that our algorithm first applied Theorem~\ref{thm:structure_main} to get a new $\{1,2,3\}$-decomposition $T$ with basic matroids $\mathcal M$ that can be obtained from $\tilde{\mathcal M}$ by removing elements and adding parallel elements.  Our algorithm then runs in parallel for each $M \in \mathcal M$ the algorithm $Alg_{(M / A_M) |(E(\tilde M) \cap E(M))}$.  Note that $(M / A_M) |(E(\tilde M) \cap E(M))$ is clearly in $D(M')$ for some original basic matroid $M' \in \tilde{\mathcal M}$.

On basic matroid $M \in \mathcal M$ we run a secretary algorithm that is $\alpha$-competitive, which by definition always returns an independent set in $(M / A_M)|(E(M) \cap E(\tilde M))$.  So by Lemma~\ref{lem:independence} our algorithm returns an independent set in $\tilde M$.  To see that it is within $O(\alpha)$ of the optimal solution, note that $w(OPT(\tilde M)) = \sum_{M \in \mathcal M} w(OPT(\tilde M)|(E(M))) \leq \sum_{M \in \mathcal M} w(OPT(M)) \leq \sum_{M \in \mathcal M} 3 \cdot w(OPT(M / A_M)) = \sum_{M \in \mathcal M} 3 \cdot w(OPT((M / A_M)|(E(M) \cap E(\tilde M))))$, where the final inequality is from Lemma~\ref{lem:value_contract} and the final equality is due to assigning weight $0$ to elements in $E(M) \setminus E(\tilde M)$.  Now by definition, for each basic matroid $M \in \mathcal M$ the algorithm that we run returns a set with expected weight at least $(1/\alpha) \cdot w(OPT((M / A_M)|(E(M) \cap E(\tilde M))))$.  Thus the expected weight of the entire set returned by our algorithm is at least $(1/\alpha) \sum_{M \in \mathcal M} w(OPT((M / A_M)|(E(M) \cap E(\tilde M)))) \geq \frac{1}{3\alpha} \cdot w(OPT(\tilde M))$.
\end{proof}

This gives an algorithm for regular matroids almost immediately.

\begin{corollary}
There is a $9e$-competitive algorithm for the matroid secretary problem on regular matroids.
\end{corollary}
\begin{proof}
Theorem~\ref{thm:regular_decomposition} tells us that we can find a $\{1,2,3\}$ decomposition of any regular matroid $\tilde M$ in which all basic matroid are graphic, cographic, or isomorphic to $R_{10}$.  Theorem~\ref{thm:main} then implies that we simply need to find a good algorithm for any matroid that is in $D(M)$ where $M$ is either a graphic matroid, a cographic matroid, or $R_{10}$.  Note that the class of graphic matroids is closed under deletions, additions of parallel elements, and contractions, as is the class of cographic matroids.  Moreover, there is already a $2e$-competitive algorithm for graphic matroids~\cite{KP09} and $3e$-competitive algorithm for cographic matroids~\cite{Soto11}, so we simply need to find an algorithm for matroids in $D(R_{10})$.

Let $M \in D(R_{10})$.  Note that there are no $3$-circuits in $R_{10}$, so $M$ was created by either contracting a single element or contracting nothing.  If any element of $R_{10}$ was contracted to construct $M$, then it is easy to see that we could contract this element first and then delete elements and add parallel elements to get $M$.  Furthermore, it is known that $R_{10} / e$ is isomorphic to the cographic matroid on $K_{3,3}$ for any element $e \in E(R_{10})$ (see~\cite[Lemma 11.2.8]{Oxley}).  Thus $M$ is cographic, so there is a $3e$-competitive algorithm for it.  On the other hand, assume that no element is contracted when constructing $M$.  If there are less than $10$ parallel element classes, then $M$ can be constructed by first deleting elements of $R_{10}$ and then adding parallel elements.  But for any element $e \in E(R_{10})$ it is known that $R_{10} \setminus \{e\}$ is isomorphic to the graphic matroid on $K_{3,3}$ (see~\cite[Lemma 11.2.8]{Oxley}), and thus $M$ is graphic so there is a $2e$-competitive algorithm.

So assume that $M$ has $10$ parallel element classes.  We choose one of them uniformly at random and delete all elements in it.  Then the resulting matroid is graphic, so there is a $2e$-competitive algorithm for it.  Since the optimal solution only includes one element of each parallel class, in expectation removing one parallel class at random only loses $1/10$ of the value of $OPT$.  Thus this algorithm is $2e \cdot (10/9) = 20e / 9$-competitive.

Thus applying Theorem~\ref{thm:main} to $\tilde M$ and the decomposition we get from Seymour's decomposition theorem gives us a competitive ratio of $3 \cdot \max\{2e, 3e, 20e/9\} = 9e$.
\end{proof}

We can go beyond regular matroid to MFMC matroids, since Seymour showed that any MFMC matroid can be constructed by taking $1$- and $2$-sums of regular matroids and the Fano matroid $F_7$, and thus any MFMC matroid has a $\{1,2,3\}$-decomposition in which all of the basic matroids are either graphic, cographic, or isomorphic to either $R_{10}$ or $F_7$.

\begin{theorem}
There is a $9e$-competitive algorithm for the matroid secretary problem on MFMC matroids.
\end{theorem}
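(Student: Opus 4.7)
The plan is to combine Seymour's decomposition theorem for MFMC matroids (every MFMC matroid is built via $1$- and $2$-sums of regular matroids and copies of $F_7$) with Theorem~\ref{thm:regular_decomposition} to obtain a $\{1,2,3\}$-decomposition of any MFMC matroid $\tilde M$ whose basic matroids are graphic, cographic, or isomorphic to $R_{10}$ or $F_7$. Theorem~\ref{thm:main} then reduces the task to exhibiting an $\alpha$-competitive algorithm for every matroid in $D(M)$ for each basic matroid class, with the final ratio being $3\alpha$. Graphic ($2e$-competitive), cographic ($3e$-competitive), and $R_{10}$ ($20e/9$-competitive) are all handled verbatim by the regular-matroid corollary proved just above, so the only new work is to design an algorithm for every matroid in $D(F_7)$.

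For $D(F_7)$ I would partition according to which contraction (if any) was performed, since by definition such a matroid is obtained from $F_7$ by contracting either nothing, a single element, or a $3$-circuit, followed by deletions and parallel additions. In the no-contraction case, if some parallel class is empty then the underlying simple matroid is obtained from $F_7 \setminus e \cong M(K_4)$ and is graphic, so the $2e$-competitive graphic algorithm applies directly; if all seven parallel classes remain, drop one class uniformly at random and then run the graphic algorithm on the result, losing an expected factor of $7/6$ (since $OPT$ contains at most one element per class), giving ratio $2e \cdot 7/6 = 7e/3$. In the single-element contraction case, each of the three lines through the contracted element becomes a $2$-circuit, so $F_7/e$ is a rank-$2$ matroid with three parallel classes of size $2$ whose simplification is $M(K_3)$; subsequent deletions and parallel additions keep us inside the graphic class, so the $2e$-competitive algorithm again suffices. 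In the $3$-circuit contraction case the rank drops from $3$ to $1$ and the four surviving elements are all parallel, so after any further deletions or parallel additions every independent set has size at most one, reducing to the classical secretary problem with an $e$-competitive algorithm.

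Combining, the worst ratio across all basic matroid classes is $\max\{2e,\,3e,\,20e/9,\,7e/3\} = 3e$, so Theorem~\ref{thm:main} yields a $3 \cdot 3e = 9e$-competitive algorithm on the given MFMC matroid, as claimed. The main obstacle I anticipate is cleanly justifying the structural identifications $F_7 \setminus e \cong M(K_4)$ and the description of $F_7/e$ as the doubled triangle $M(K_3)$, and then verifying that the closure operations built into $D(\cdot)$ (deleting elements and adding parallel elements) preserve the relevant classes in each of the three contraction cases; both facts follow from the explicit $GF(2)$ representation of $F_7$ but should be pinned down before invoking Theorem~\ref{thm:main}.
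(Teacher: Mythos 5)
Your proposal is correct and follows essentially the same route as the paper: reduce via Seymour's MFMC decomposition and Theorem~\ref{thm:main} to designing an algorithm for $D(F_7)$, reuse the graphic/cographic/$R_{10}$ algorithms, and handle the seven-parallel-class case by deleting a random class at a $7/6$ loss, giving $3\cdot 3e = 9e$. Your case analysis is in fact slightly more careful than the paper's, since you explicitly treat the $3$-circuit contraction of $F_7$ (rank drops to $1$, classical secretary) and correctly describe $F_7/e$ as a doubled triangle rather than folding it into the $M(K_4)$ identification.
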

\begin{proof}
Clearly by Theorem~\ref{thm:main} and the above discussion we just need to design a $3e$-competitive algorithm for matroids in $D(F_7)$.  Our approach is similar to the one we took for $R_{10}$.  Let $M \in D(R_{10})$.  If $M$ has less than seven parallel classes then $M$ can be  constructed by either deleting or contracting an element of $F_7$ before adding parallel elements (and possibly deleting other elements).  But any single-element deletion or contraction of $F_7$ is isomorphic to the graphic matroid of $K_4$ (see~\cite[Section 1.5]{Oxley}), so in this case we already have a $2e$-competitive algorithm.  If $M$ has seven parallel classes then we choose one of them uniformly at random to delete, obtaining a graphic matroid in which the optimal solution has at least $6/7$ of the value of the original optimal solution (in expectation).  This gives a $(7/6) \cdot 2e < 3e$-competitive algorithm, so we get a $3e$-competitive algorithm for MFMC matroids.
\end{proof}

\section{Future Work}
While our motivation was regular matroids, our techniques imply that any matroid with a $\{1,2,3\}$-decomposition in which the basic matroids admit good algorithms for matroid secretary, itself admits a good algorithm.  We used this to extend our results to MFMC matroids, but it would be interesting to see how much further this can be pushed -- what other classes of matroids admit such decompositions?  There is also a significant amount of work (see e.g.~\cite{Tru92}) on extending the definition of matroid sums to non-binary matroids (Seymour's original definition, which we use, breaks down for non-binary matroids), so it would be interesting to extend our techniques to these more general definitions of sums.

\bibliographystyle{alpha}
\bibliography{matroids}

\end{document}